\newcommand{\dP}{\mathrm{P}}
\newcommand{\dQ}{\mathrm{Q}}
\newcommand{\bPP}[1]{{\dP_{#1}}}
\newcommand{\bQQ}[1]{{\dQ_{#1}}}
\newcommand{\bPr}[1]{{\mathrm{P}}\left(#1\right)}
\newcommand{\bP}[2]{\mathrm{P}_{#1}\left({#2}\right)}
\newcommand{\bEE}{{\mathbb{E}}}
\newcommand{\cB}{{\mathcal B}}
\newcommand{\cC}{{\mathcal C}}
\newcommand{\cF}{{\mathcal F}}
\newcommand{\cM}{{\mathcal M}}
\newcommand{\cN}{{\mathcal N}}
\newcommand{\cS}{{\mathcal S}}
\newcommand{\cT}{{\mathcal T}}
\newcommand{\cX}{{\mathcal X}}
\newcommand{\cY}{{\mathcal Y}}
\newcommand{\cZ}{{\mathcal Z}}
\newcommand{\bx}{\mathbf{x}}
\newcommand{\by}{\mathbf{y}}
\newcommand{\ep}{\epsilon}
\newtheorem{theorem}{Theorem}
\newtheorem*{corollary*}{Corollary}
\newtheorem{lemma}[theorem]{Lemma}
\newtheorem*{lemma*}{Lemmas}
\theoremstyle{remark}
\newtheorem*{remark*}{Remark}
\newtheorem*{remarks*}{Remarks}
\theoremstyle{definition}
\newcommand{\MI}[2]{I\left(#1 \wedge #2\right)}
\newcommand{\CMI}[3]{I\left(#1 \wedge #2\mid #3\right)}
\newcommand{\indicator}[1]{{\mathds{1}\left(#1\right)}}
\newcommand{\mR}{\mathbb{R}}
\newcommand{\mult}{\cdot}
\newcommand{\conc}{,}
\newcommand{\maxI}[1]{I_{\max}\left(#1\right)}
\newcommand{\maxIep}[1]{I_{\max}^\ep\left(#1\right)}
\newcommand{\denW}[2]{\omega(#1 \mid #2)}
\newcommand{\wircap}{C}
\newcommand{\hl}[1]{}
\begin{document}
\IEEEoverridecommandlockouts

\title{Universal Hashing for Information Theoretic Security}

\author{Himanshu Tyagi and Alexander Vardy}

\maketitle

{\renewcommand{\thefootnote}{}\footnotetext{
\noindent Himanshu Tyagi is with the Indian Institute of
Science, Bangalore 560012, India. Email:
htyagi@ece.iisc.ernet.in.

Alexander Vardy is with the University of California, San
Diego, La Jolla, CA 92093, USA.  Email: avardy@eng.ucsd.edu.
}} \renewcommand{\thefootnote}{\arabic{footnote}}
\setcounter{footnote}{0}

\begin{abstract}
The information theoretic approach to security entails
harnessing the correlated randomness available in nature to
establish security.  It uses tools from information theory
and coding and yields provable security, even against an
adversary with unbounded computational power.  However, the
feasibility of this approach in practice depends on the
development of efficiently implementable schemes.
In this article, we review a special class of practical
schemes for information theoretic security that are based on
$2$-{\it universal hash families}.
Specific cases of secret key agreement and wiretap coding
are considered, and general themes are identified.
The scheme presented for wiretap coding is {\it modular} and
can be implemented easily by including an extra
pre-processing layer over the existing transmission codes.
\end{abstract}
\begin{keywords}
$2$-Universal hash family, information theoretic security,
  modular coding schemes, secret key agreement, wiretap codes.
\end{keywords}

\section{Introduction}
Random variations in physical observations constitute a
valuable resource for facilitating security in engineering
systems. Authentication keys can be extracted from noisy
recordings of biometric signatures
\cite{RatConBol01,JaiRosPan06}; unique signatures for
hardware devices can be generated by implanting a {\it
  physically uncloneable function} (PUF), implemented using
random manufacturing variations in the period of a
ring-oscillator \cite{Pap01, GasClaDijDev02}; secret keys
extracted from the random fade of a wireless communication
channel can be used for cryptographic applications
\cite{YeMRSTM10}; various physical layer security techniques
can be used to mitigate the security threats in cyberphysical
systems and ad-hoc networks \cite{WangL13, JainTB12}; and
wiretap codes can be used for protection against side-channel
attacks~\cite{BringerCL12}.  The information theoretic
approach for security entails developing a systematic theory
for designing and analyzing security primitives
based on harnessing physical randomness.  In this
approach, we treat physical observations as a source of
randomness hidden from the attacker and study the design of
optimal codes for accomplishing specific security
objectives.  One limitation of this approach is the
assumption that the attacker does not have a complete access
to or cannot manipulate the correlated randomness
 used for
implementing security. In lieu, we can provide information
theoretic security guarantees which hold even when the
attacker has unlimited computational power.

The origin of information theoretic security, as well as of
theoretical cryptography, lies in the seminal paper of
Shannon \cite{Sha49}.  This paper shows that in order to
securely transmit an $m$-bit random message over an insecure
public channel the transmitter and the receiver must share
an $m$-bit {\it perfect secret key}\footnote{Shannon
  \cite{Sha49} established the necessity of an $m$-bit
  perfect secret key only for the case when a one-time-pad
  is used for encryption. The general necessary condition
  for any scheme was shown in \cite{Mas88} (see, also,
  \cite[Problems 2.12 and 2.13]{KatLin07},
  \cite{HoCU11},\cite[Section VI]{TyaWat14ii}).}, $i.e.$, $m$
uniformly distributed bits that are concealed from an
eavesdropper with access to the public channel. This
requirement of large perfect secret keys is impractical, and
thus, the result of \cite{Sha49} is largely considered a
negative result.  Following the pioneering work of Diffie
and Hellman \cite{DiffieH76}, modern cryptography
circumvents this restriction by relaxing the security
requirement from information theoretic security to security
against a computationally bounded adversary.  However, a
different remedy is possible in situations where the
transmitter and the receiver have access to correlated
randomness, which is available to the eavesdropper only in
part.  Specifically, it was shown by Wyner in \cite{Wyn75ii}
that if the eavesdropper can access only a noisy version of
the observations of the legitimate receiver, secure
transmission\footnote{Strictly speaking, we are concerned
  with the transmission of confidential messages in the
  presence of passive eavesdroppers.} is feasible without
requiring any additional resources. Furthermore, it was
shown in \cite{BenBraRob88, Mau93,AhlCsi93} that information
theoretically secure secret keys can be extracted from
correlated random observations by communicating over an
insecure, public communication channel.  These works
constitute the basic foundations of information theoretic
security, suggesting that the requirement of large secret
keys for the feasibility of information theoretic security
can be circumvented if correlated randomness is available.

Inspired by these results, practical schemes for information
theoretically secure message transmission and secret key
agreement have been proposed, utilizing the correlated
randomness available in the physical communication channel
($cf.$~\cite{AonoHOKS05,MathurTMY08, CroftPK10, XiaoGT10})
or the correlated randomness extracted from physical
observations ($cf.$~\cite{RatConBol01,JaiRosPan06, Pap01,
  GasClaDijDev02}).
However, most of the practical schemes proposed have either
no theoretical guarantees of performance or are suboptimal.
In fact, even for the basic problems of {\it secret key
  agreement} and coding for a {\it wiretap channel}, optimal
practical schemes are few and have emerged only over the
last decade (see~\cite{BlochTMM06, DodOstReySmi08,YeNar12,
  RenesRS13, ChouBA13} for optimal schemes for secret key
agreement and the review article \cite{HarrisonABMB13} for
references on optimal codes for a wiretap channel).

In this article, we review a class of practical coding
schemes for attaining information theoretically secure
secret key agreement as well as for information
theoretically secure message transmission in a wiretap
channel model.  Specifically, we focus on schemes that use
{\em $2$-universal hash families} (UHF) \cite{CarWeg79} (see
Section~\ref{s:2universal} for the definition of a UHF) as a
building block. This restriction in scope is for two
reasons: First, UHFs are easy to implement and are ideally
suited for lightweight cryptography ($cf.$~\cite{Kra94,
  YukselKS04}), and second, while review articles are
available that cover the role of error-correcting codes in
physical layer security ($cf.$ \cite{MukherjeeFHS14,
  HarrisonABMB13}), the UHF based schemes for wiretap 
channels are recent and are not well-known.

The remainder of this article is organized as follows. We
begin by describing the secret key agreement and the wiretap
coding problem in the next section. In the subsequent
section, we define a UHF and discuss its basic properties
and some practical implementations.  In the final two
sections, we review UHF based coding schemes for secret key
agreement and wiretap channels.

\section{Primitives for Information theoretic security}
In this section, we describe two basic primitives for
information theoretic security.  Both rely on the
correlation in the random observations of legitimate
parties; however, the form of correlation is different in
each.  The first of these, namely secret key agreement, is
concerned with extracting shared secret bits from noisy
correlated random data.  The second, coding for wiretap
channels, focuses on sending data over a noisy channel when
a passive eavesdropper observes noisy versions of the
transmissions. The two problems seem to be different in
their scope and objective. Yet similar schemes based on
error-correcting codes and UHF will be seen to be optimal
for both in many cases.

Note that the basic cryptographic primitives of oblivious
transfer \cite{Rab81} and bit commitment \cite{Blu83}, too,
have information theoretically secure counterparts; see, for
instance, \cite{CreKil88, NasWin08, WinWul12, AhlCsi13,
  RaoPra14, TyaWat14ii} and
\cite{WinNasIma03,ImaMorNasWin06,RanTapWinWul11,
  TyaWat14ii}, respectively, for treatments of information
theoretically secure oblivious transfer and bit commitment.
However, there are only a few practical schemes available
($cf.$~\cite{ImaMorNasWin06}), and they will not be reviewed
here.

\subsection{Secret key agreement}
Discrete, correlated random variables $X$ and $Y$, with
arbitrary but known distribution $\bPP{XY}$, are observed by
the first and the second party, respectively.
The parties seek to agree on random, unbiased bits. These
correlated random variables correspond to random physical
observations and can be derived, for instance, from
different noisy recordings of the same biometric
fingerprint, or from the random fade observed in a wireless
communication channel. The parties also have access to a
public communication channel such as a shared public server,
or a broadcast channel, or any other insecure communication
network. They can use this communication channel to exchange
bits with each other; however, the bits exchanged will be
available to a (passive) eavesdropper. The mode of
communication allowed depends on the application at
hand. For instance, in the biometric and PUF applications,
only one sided communication from $X$ to $Y$ is available
since $Y$ corresponds to a later (in time) recording of $X$
itself. In general, the parties can execute an interactive
communication protocol $\Pi$ with multiple rounds of
interaction and possibly randomized communication in each
round\footnote{The communicated data $\Pi$ is sometimes
  referred to as {\em helper data}.}.  The goal is to derive
a secret key $K$ consisting of bits $(K_1, ..., K_l)$ such
that (i) with large probability, both parties can recover
$K$ accurately; (ii) bits $(K_1, ..., K_l)$ are almost
independent and unbiased; and (iii) an eavesdropper with
access to the communication $\Pi$ and a side information $Z$
cannot ascertain any information about $K$.
\begin{figure}[ht]
\begin{subfigure}[b]{\linewidth}
\centering \includegraphics[scale=.35]{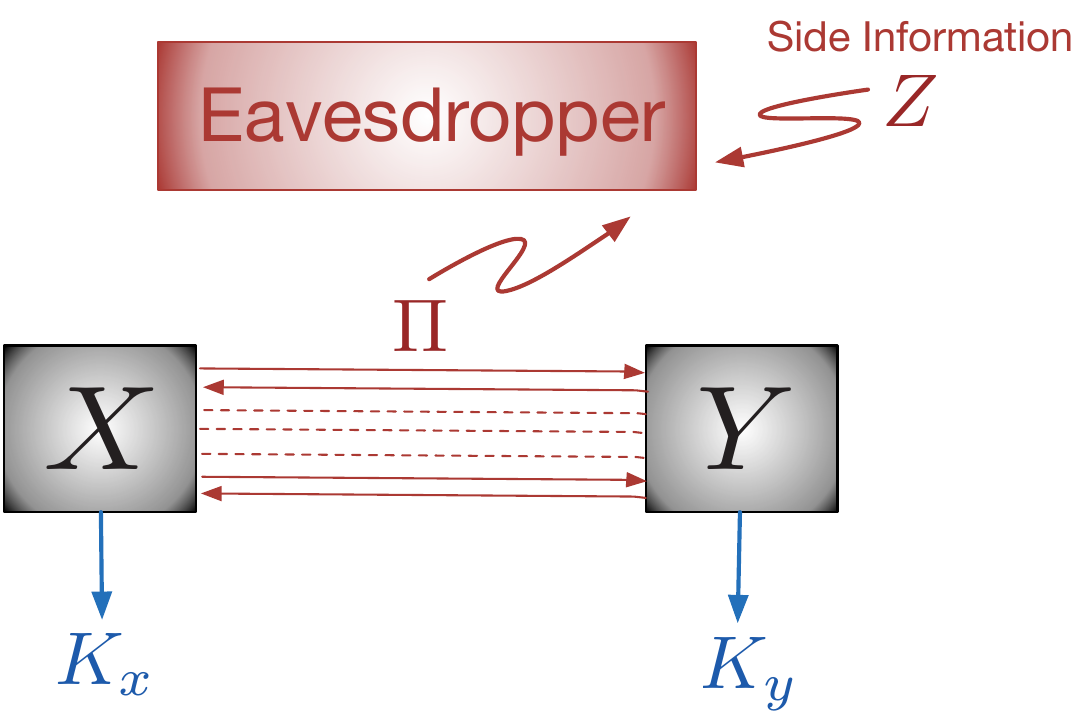}
\caption{Secret key agreement protocol}
\end{subfigure}
\begin{subfigure}[b]{.5\linewidth}
\centering \includegraphics[scale=.25]{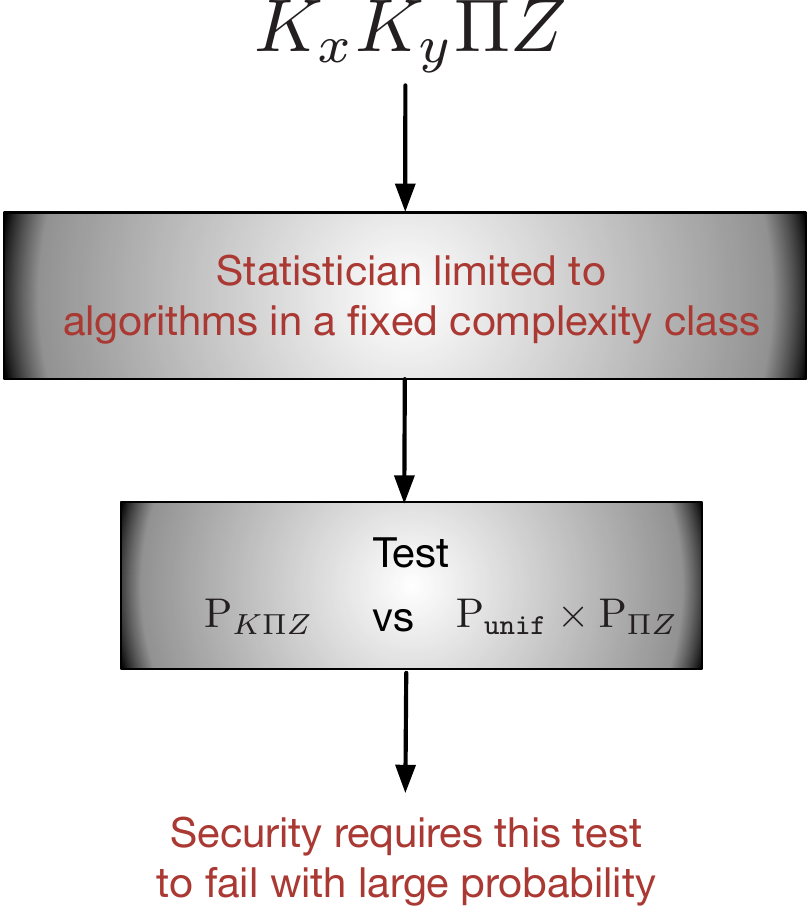}
\caption{Computational security requirement}
\end{subfigure}
\begin{subfigure}[b]{.5\linewidth}
\centering \includegraphics[scale=.25]{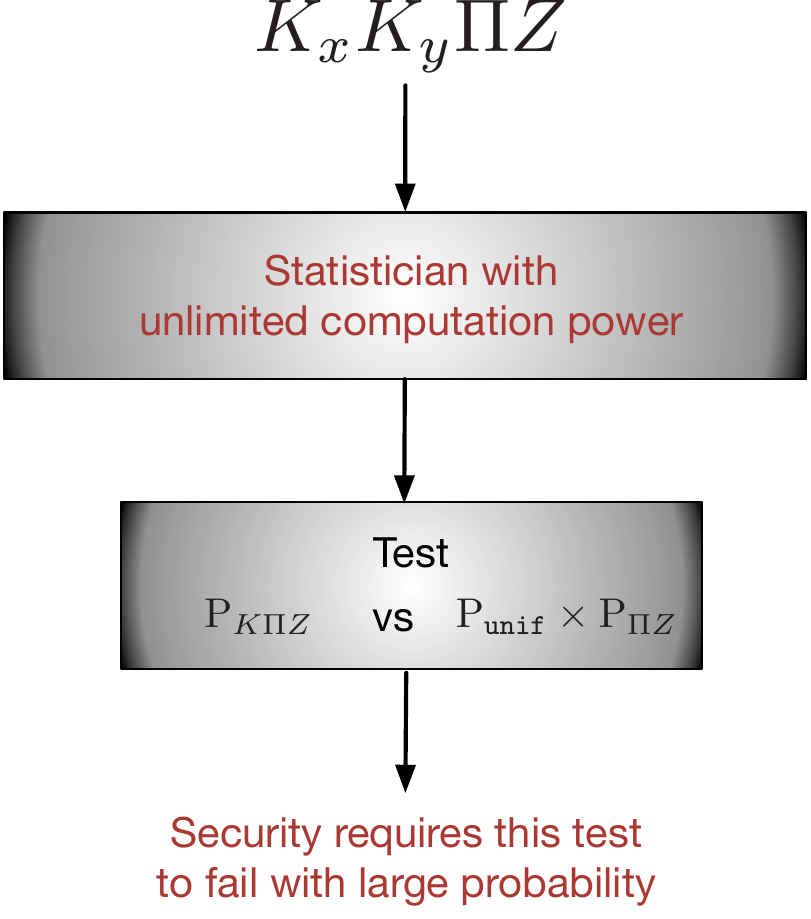}
\caption{Information theoretic security requirement}
\end{subfigure}
\caption{Illustration of secret key agreement}
\label{f:SK}
\end{figure}

Condition (i) above constitutes the {\it recoverability}
requirement. Parties must form estimates $K_x$ and $K_y$ of
$K$ such that
\[
\bPr{K_x = K_y = K} \geq 1-\ep,
\]
for a suitably small parameter $\ep$.

Conditions (ii) and (iii) above constitute the {\it
  security} requirement. Traditional notion of cryptographic
security is computational and requires ($cf.$
\cite{GolMic84}) that a computationally bounded
adversary with access to efficient
algorithms for solving problems in a particular complexity
class, but not beyond it, cannot reliably distinguish if the
observed outputs of the secret key agreement protocol $(K,
\Pi, Z)$ are coming from the real protocol or an ideal one
with all values of $K$ equally likely for each realization
of $(\Pi, Z)$. In contrast, \cite{BenBraRob88,
  Mau93,AhlCsi93} initiated the study of the secret key
agreement problem under {\it information theoretic security}
where the computationally bounded adversary above is
replaced by an unrestricted one with access to any statistical 
test\footnote{For another
  connection between binary hypothesis testing and secret
  key agreement, see \cite{TyaWat14,
    TyaWat14ii}.}. Formally, it is required that the {\em
  statistical distance} between the joint distribution
$\bPP{K\Pi Z}$ and $\bPP{\mathtt{unif}}\times \bPP{\Pi Z}$
is small. Two popular measures of statistical distance that
have been used in secret key agreement literature are the
K-L divergence \cite{Mau93, AhlCsi93, BenBraCreMau95, Csi96,
  CsiNar04, CsiNar08, Hay11}
\[
D(\dP\|\dQ) =\sum_i \bPP i \log \frac{\bPP i}{\bQQ i},
\]
and the total variation distance \cite{RenWol05, Ren05, Hay13}
\[
\|\bPP{} - \bQQ{}\|_1 = \frac 12 \sum_i |\bPP i - \bQQ i|.
\]
For concreteness, we shall consider security of secret keys
under the total variation distance and require
\[
\left\|\bPP{K\Pi Z} - \bPP{{\tt unif}}\times \bPP{\Pi
  Z}\right\|_1 \leq \delta,
\] 
where $\bPP{{\tt unif}}$ is a uniform distribution on
$l$-bits.  Figure~\ref{f:SK} illustrates the setup and a
comparison of the computational and information theoretic
security criteria.  For given values of recoverability and
security parameters $\ep$ and $\delta$, we seek to design
secret key agreement protocols that yield as many bits of
secret key $K$ as possible, $i.e.$, the largest possible
value of $l$ above.

The theoretical limits of the length of secret keys possible
have been studied extensively: \cite{Mau93} and
\cite{AhlCsi93} considered the case when the underlying
observations are {\it independent and identically
  distributed} (IID) and, under a weaker notion of security
than that above, characterized the {\it secret key
  capacity}, $i.e.$, the maximum rate of secret key length
per observation; \cite{BenBraCreMau95, Csi96, AhlCsi98,
  MauWol00} provide basic tools for attaining the stronger
notion of security above without any loss of performance;
\cite{CsiNar04} establishes the secret key capacity for a
multiparty version of the problem; \cite{RenWol05, Ren05}
derive bounds on the secret key length for the single-shot
case above, when only one-sided communication is allowed;
\cite{TyaWat14, HayTyaWat14i, HayTyaWat14ii} give the
best-known bounds for the general problem above, stressing
on the role of interactive communication. However, none of
these works give an efficient secret key agreement
scheme. The literature on constructive coding schemes, on
the other hand, is narrow and has focused mostly on the case
with one-sided communication. In this article, 
we will discuss a class of
constructive schemes for secret key agreement that rely on UHFs.

\subsection{Coding for wiretap channel}
The problem of wiretap coding is that of 
transmitting a message with confidentiality from an
eavesdropper with side-information. 
Specifically, a senders seeks to communicate a message $M$
to a receiver by using transmissions over a noisy
communication channel $T$ with inputs from a set $\cX$ and
outputs from a set $\cY$. For each input $x$ to $T$, the
receiver observes an output $y$ with a given probability
density $T(y|x)$. Furthermore, for each transmission $x$ an
eavesdropper observes the output $z$ of another
communication channel $W$. It is required that while the
legitimate receiver decodes $M$ with a low probability of
error, while the message remains concealed from the eavesdropper
(or the {\em wire-tapper}).  See Figure~\ref{f:wiretap} for
an illustration.
\begin{figure}[ht]
\centering \includegraphics[scale=.4]{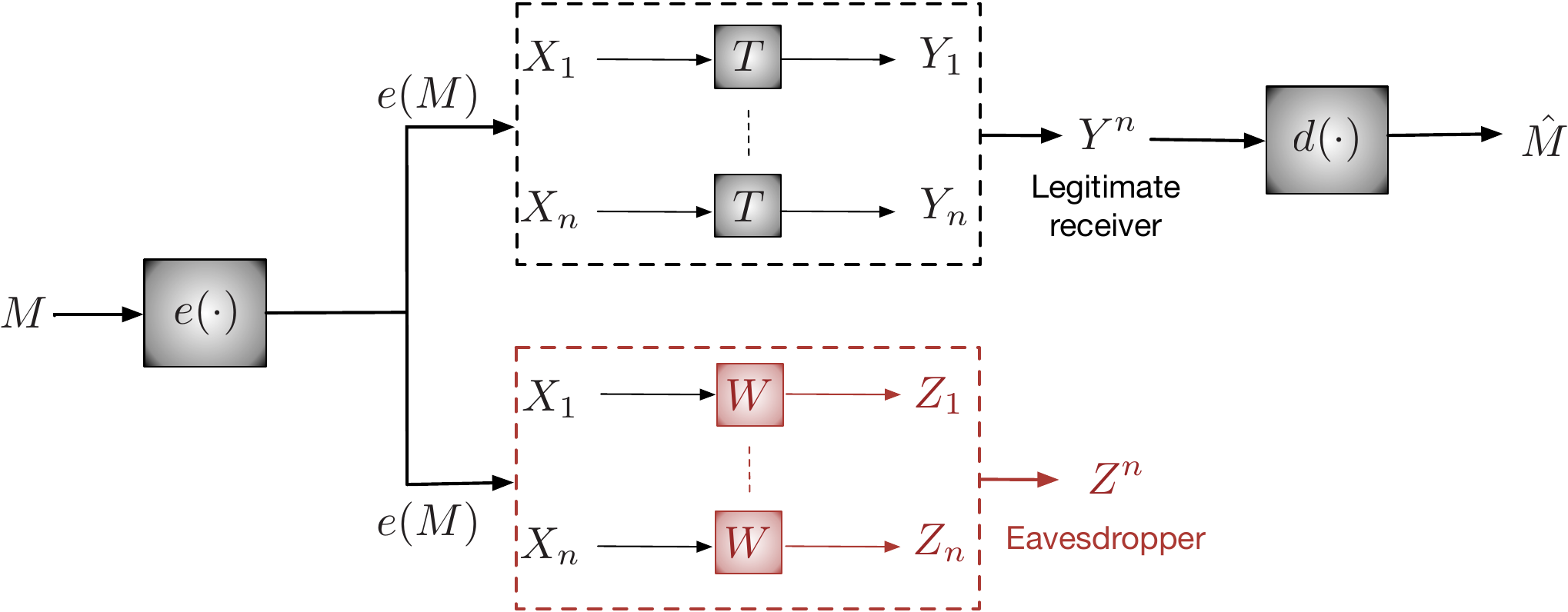}
\caption{Illustration of wiretap channel}
\label{f:wiretap}
\end{figure}

An $(n,k)$ code for this wiretap channel consists of a
(stochastic) encoder $e: \{0,1\}^k\rightarrow \cX^n$ and a
decoder $d: \cY^n \rightarrow \{0,1\}^k$.  A random message
$M$ is sent as $e(M)$ and decoded as $\hat M = d(Y^n)$,
where $Y^n = (Y_1,..., Y_n)$ denotes the outputs for $n$
independent uses of the channel $T$ for inputs $(X_1, ...,
X_n) = e(M)$. At the same time, an eavesdropper gets to
observe the outputs $Z^n$ corresponding to transmitting the
inputs $X^n$ over the channel $W$.  It is required that the
code $(e, d)$ ensures high reliability, $i.e.$ $\bPr{M \neq
  \hat M} \approx 0$ (it is required that $\bPr{M \neq \hat
  M}$ goes to $0$ sufficiently rapidly in $n$), and ensures
security under an appropriate notion.  The rate of this code
is $(k/n)$; the maximum possible asymptotic rate of a
wiretap code is called the {\it wiretap capacity} of $(T,
W)$.

This basic model was introduced by Wyner in~\cite{Wyn75ii}
where he considered a {\it degraded wiretap channel} where
$W = V\circ T$ for some stochastic mapping $V$, $i.e.$, the
eavesdropper's observation is a further noisy version of the
legitimate receiver's observation, and for an input $x$ the
eavesdropper's channel produces an output $z$ with
probability $W(z|x) = \sum_{y}V(z|y)T(y|x)$.  For this
important special case, Wyner characterized the wiretap
capacity under the {\it weak security} requirement
given by
\begin{align}
\lim_{n\rightarrow \infty}\frac 1 n I(M \wedge Z^n) = 0,
\nonumber
\end{align}
where the message $M$ is a uniform random variable and
$I(U\wedge V)$ denotes the mutual information between random
variables $U$ and $V$~\cite{CsiKor11}.  Later, Csisz\'ar and
K\"orner characterized $\wircap$ for all discrete,
memoryless wiretap channels \cite{CsiKor78}.

Interestingly, the wiretap capacity remains unchanged even
if we drop the normalization by $n$ in the weak security
condition above and require {\it strong security}
\cite{Csi96}
\begin{align}
\lim_{n\rightarrow \infty}I(M \wedge Z^n) = 0, \nonumber
\end{align}
for a uniform message $M$. A still more demanding notion of
security 
introduced\footnote{
This notion of security is termed {\it mutual information security} in 
\cite{BelTesVar12ii} and {\it source
  universality} in \cite{HayM13}.} 
in \cite{BelTesVar12ii}
requires security not only for a uniform message $M$ but any
random message $M$ and is given by
\begin{align}
\lim_{n\rightarrow \infty}\max_{\bPP{M}}I(M \wedge Z^n) =0.
\nonumber
\end{align}
In fact, \cite{BelTesVar12ii} extended the cryptographic
notion of {\it semantic security} ($cf.$~\cite{GolMic84})
to the wiretap channel and showed that it is implied by the 
security requirement above. In this article, we shall use
the term {\it semantic security} synonymously with the security requirement
above, keeping in mind that, in fact, we are demanding
something even stronger than semantic security.

It remains an open question if the wiretap capacity can be
achieved under semantic security, in general. However, for specific
wiretap channels, codes that achieve wiretap capacity while
ensuring semantic security have been proposed recently (
$cf.$ \cite{HayMat10, MahV11, BelTes12, BelTesVar12ii,
  LinLuzBelSte13}).  In particular, the schemes in
  \cite{HayMat10, Hay11, BelTes12, HayM13, Hay13,TyagiV14}
  rely on UHFs and are discussed below.  
\section{$2$-universal hash families}\label{s:2universal}
The key primitive that underlies all the schemes that will
be discussed in this article is a UHF. Universal hashing was
introduced by Carter and Wegman in their seminal work
\cite{CarWeg79} as a multipurpose tool for theoretical
computer science and was applied for privacy amplification
first in \cite{BenBraRob88}.  A UHF is, roughly speaking, a
family of functions such that the random mapping obtained by
uniformly choosing a function from this family is {\it
  almost} invertible. In information theory, as in
theoretical computer science, many of the proofs are
completed using a random mapping or binning or coloring of
elements of a set. It turns out that most of the tasks that
can be done using a completely random mapping can also be
done by a randomly selected member of a UHF.
Moreover, while implementing a
random mapping is not practical, structured implementations
of certain UHFs are available ($cf.$~\cite{Kra94,
  YukselKS04, HayT15} and \cite[Appendix II]{Hay13}).
Thus, UHFs constitute an efficiently 
implementable substitute for
random mappings.

Formally, a family $\cF$ consisting of mappings $f: \cX
\rightarrow \{1, ...,2^k\}$ is a ($k$-bit) UHF if
for every $x\neq x^\prime$
\begin{align}
\frac 1{|\cF|} |\{f\in \cF : f(x) = f(x^\prime)\}|\leq
2^{-k},
\label{e:UHF_property}
\end{align}
$i.e.$, the random mapping $F$ chosen uniformly over $\cF$
maps two distinct values to the same output with probability
less than $2^{-k}$.

The diverse applications of UHFs in information theoretic
security include: secret key agreement
($cf.$~\cite{BenBraCreMau95, RenWol05, Hay11}), quantum key
distribution ($cf.$~\cite{Ren05}), biometric and hardware
security ($cf.$~\cite{DodOstReySmi08}), and coding for
wiretap channels ($cf.$~\cite{Hay06, HayMat10, BelTes12,
  HayM13}); see \cite{Sti02} for other applications in
cryptography. In these applications, the importance of a UHF
lies in the role it plays in randomness extraction in source
and channel models. In a source model, we consider a randomness
which is observed by a legitimate party and is generated by a fixed
distribution. On the other hand, in a channel model, the
randomness is observed by an adversary and its distribution
is controlled by a legitimate party.
Basic results
were first derived for source models and, later, variants of
these basic results were derived for channel models; we
shall review the results for both these cases below.

\subsection{Source models}
In a source model, the available random observation
and eavesdropper's observation are modeled by correlated random
variables $(X,Z)$. In applications such as secret key
agreement, we seek to design a primitive that extracts from
$X$ uniformly distributed random bits that are almost
independent of $Z$. UHFs 
described above provide a
constructive tool for realizing such a primitive. First, 
we consider
the special case of a constant $Z$. The main result
here is the {\it leftover hash lemma} which shows roughly
that the output of a randomly chosen member of a $k$-bit UHF
applied to a random variable $X$ constitutes uniformly
random bits, provided that $k$ is smaller than a
threshold. Different versions of {\it leftover hash lemma}
are available in literature, each with a slightly different
choice of this threshold ($cf.$ \cite{ImpZuc89, ImpLevLub89,
  BenBraCreMau95, HastadILL99, Sti02, Ren05, RenWol05}).  We
review a version due to \cite{Ren05, RenWol05} where the
aforementioned threshold for randomness extraction is given
by the {\it smooth min-entropy} $H_{\min}^\ep(\bPP X)$ of the
underlying random variable $X$, defined as follows
\cite{Ren05, RenWol05}: The {\it min-entropy} of $X$ is
given by \cite{Ren61}
\[
H_{\min}(\bPP X) = \min_x -\log \bP X x,
\] 
and the $\ep$-smooth min-entropy of $X$ is defined as
\cite{RenWol04, RenWol05, Ren05}
\[
H_{\min}^\ep(\bPP X) = \sup_{\bQQ{}:\, \|\bPP{} - \bQQ{}\|_1 \leq
  \ep}\,\, H_{\min}(\bQQ{}).
\]
The leftover hash lemma uses a randomly selected member of a
given UHF. In order to facilitate this random selection, we
assume that a random seed $S$ distributed uniformly over a
discrete set $\cS$ is available to both the legitimate party
as well as the eavesdropper. While bounding the leaked
information of the extracted bits, eavesdropper's knowledge
of the random seed is taken into account as well.

\begin{lemma}[{{\bf Leftover hash: No side information}}]
Consider random variables $X$ taking values in a finite set
$\cX$. Then, for a $k$-bit UHF consisting of mappings $\{f_s,
s\in \cS\}$ and a random seed $S$ distributed uniformly over
the set $\cS$, it holds for every $\ep \in [0, 1)$ that
\[
\|\bPP {f_S(X)S} - \bPP {\mathtt{unif}}\times \bPP S\|_1
\leq \ep+ \frac 12 \sqrt{2^{k-H_{\min}^\ep(X)}}.
\]
\end{lemma}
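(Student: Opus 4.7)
The plan is to combine a standard smoothing step with a two-norm bound on the bias, using the UHF property to control the collision probability. Throughout, let $P(y,s) = \bPr{f_s(X)=y}\cdot|\cS|^{-1}$ and let $U(y,s) = 2^{-k}|\cS|^{-1}$ denote the target distribution $\bPP{\mathtt{unif}}\times\bPP S$.

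First I would absorb the $\ep$ term through smoothing. By the definition of $H_{\min}^\ep(X)$, for every $\eta>0$ there is a (sub-)distribution $\bQQ X$ with $\|\bPP X - \bQQ X\|_1 \le \ep$ and $H_{\min}(\bQQ X) \ge H_{\min}^\ep(X) - \eta$. The map $x\mapsto (f_S(x),S)$ is deterministic given $S$, so passing $\bPP X$ and $\bQQ X$ through it can only decrease total variation; hence by the triangle inequality it suffices to prove the bound $\tfrac12\sqrt{2^{k-H_{\min}(\bQQ X)}}$ for the random variable distributed as $\bQQ X$, and then let $\eta\to 0$.

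Next, I would convert $\ell_1$ to $\ell_2$ by Cauchy--Schwarz, which gives
\begin{equation*}
\|P-U\|_1 \;\le\; \tfrac{1}{2}\sqrt{2^{k}|\cS|}\;\Bigl(\sum_{y,s}(P(y,s)-U(y,s))^2\Bigr)^{1/2}.
\end{equation*}
Expanding the square and using $\sum_{y,s}P(y,s) = 1$ cancels the cross term against $\|U\|_2^2$, leaving
\begin{equation*}
\sum_{y,s}(P(y,s)-U(y,s))^2 \;=\; \sum_{y,s} P(y,s)^2 \;-\; \frac{2^{-k}}{|\cS|}.
\end{equation*}

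The main step, and the only place the UHF hypothesis enters, is the collision-probability calculation. Writing $P(y,s)^2$ as a double sum over pre-images gives
\begin{equation*}
\sum_{y,s}P(y,s)^2 \;=\; \frac{1}{|\cS|}\,\bEE_S\!\!\sum_{x,x'}\bQQ(x)\bQQ(x')\,\indicator{f_S(x)=f_S(x')}.
\end{equation*}
Splitting the inner sum into $x=x'$ and $x\ne x'$ and invoking \eqref{e:UHF_property} to bound $\bEE_S\indicator{f_S(x)=f_S(x')}\le 2^{-k}$ for $x\ne x'$, together with $\sum_x \bQQ(x)^2 \le (\max_x \bQQ(x))\cdot 1 = 2^{-H_{\min}(\bQQ)}$, yields
\begin{equation*}
\sum_{y,s}P(y,s)^2 \;\le\; \frac{2^{-H_{\min}(\bQQ)}}{|\cS|} \;+\; \frac{2^{-k}}{|\cS|}.
\end{equation*}
The $2^{-k}/|\cS|$ term cancels exactly against $\|U\|_2^2$, and substituting back into the Cauchy--Schwarz bound gives $\tfrac12\sqrt{2^{k-H_{\min}(\bQQ X)}}$, which combined with the smoothing step completes the proof.

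The non-routine part is really the collision-probability bookkeeping: one must recognize that the $\sqrt{2^k|\cS|}$ blow-up from Cauchy--Schwarz is precisely cancelled by a $1/|\cS|$ factor produced by averaging over the seed, and that the UHF property \eqref{e:UHF_property} lets us absorb all off-diagonal collision terms at a cost of exactly $2^{-k}$, which is in turn swallowed by the $\|U\|_2^2$ subtraction. The smoothing step is short but essential---without it the bound would read $H_{\min}(X)$ rather than the much more usable $H_{\min}^\ep(X)$.
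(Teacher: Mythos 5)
The paper gives no proof of this lemma---it is quoted from \cite{ImpLevLub89, ImpZuc89, Ren05, RenWol05}---so there is no in-paper argument to compare against; your proof is the standard collision-probability argument from those references, and it is correct: the smoothing/triangle-inequality reduction via data processing, the Cauchy--Schwarz passage to the $\ell_2$ bias (with the $\sqrt{2^k|\cS|}$ factor matching the paper's $\tfrac12\sum|\cdot|$ normalization of total variation), and the use of \eqref{e:UHF_property} to absorb the off-diagonal collision terms into the $2^{-k}/|\cS|$ that cancels against $\|U\|_2^2$ all check out. The one point to watch is that your exact cancellation $\sum_{y,s}(P-U)^2=\sum_{y,s}P^2-2^{-k}/|\cS|$ uses $\sum_{y,s}P(y,s)=1$, i.e.\ it requires the smoothing distribution $\bQQ X$ to be normalized; this is consistent with the paper's definition of $H_{\min}^\ep$ (a supremum over distributions within total variation $\ep$), but if one allows genuinely subnormalized $\bQQ X$ as in Renner--Wolf, the identity degrades and a small extra term of order the mass deficit must be tracked.
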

The first instance of a variant of this result, for the
special case $\ep = 0$, appeared in~\cite{ImpLevLub89} (see,
also, \cite{HastadILL99} for further strengthening of this
result).  The term ``leftover hash lemma'' appeared in
\cite{ImpZuc89} where a strengthening of the result
of~\cite{ImpLevLub89} was given with R\'enyi entropy of
order $2$ in place of min-entropy.  The
form given above is a special case of a general result in
\cite{Ren05, RenWol05} for the case where, in addition to
the random seed $S$, the eavesdropper observes a
(possibly continuous-valued) random variable $Z$.  In this general
version, the threshold $H_{\min}^\ep(X)$ is replaced by the
$\ep$-{\it smooth conditional min-entropy} given by
\cite{RenWol04, Ren05}
\[
H_{\min}^\ep(\bPP{XZ}|Z) = \sup_{\bQQ{XZ}:\, \|\bPP{XZ} - \bQQ{XZ}\|_1
  \leq \ep}\,\, H_{\min}(\bQQ{XZ}| Z),
\]
where $H_{\min}(\bQQ{XZ} | Z)$ denotes the conditional
min-entropy
\[
H_{\min}(\bPP{XZ}| Z) = \sup_{\bQQ Z: \mathtt{supp}(\bPP Z) \subset
\mathtt{supp}(\bQQ Z)}H_{\min}(\bPP{XZ}| \bQQ Z)
\]
and, for $\bPP Z$ and $\bQQ Z$ with densities $f_{\bPP {}}$
and $f_{\bQQ {}}$ (with respect
to a
measure $\mu$ on $\cZ$), respectively,
\[
H_{\min}(\bPP{XZ}| \bQQ Z)
= \inf_{x\in \cX,z \in \mathtt{supp}(\bQQ Z)} - \log
\frac{\bP{X|Z}{x|z}f_{\bPP{}}(z)}{f_{\bQQ {}}(z)}.
\]
Note that smooth min-entropies replaces Shannon entropies as
a measure of randomness in the context of randomness
extraction (see \cite[Section VI]{BenBraCreMau95} for
further discussion). However, for IID observations $X^n$,
Shannon entropy constitutes the leading asymptotic term in
smooth min-entropy of $\bPP{X^n}$ ($cf.$~\cite{Ren05}).  We
depict the result of \cite{Ren05, RenWol05} in
Figure~\ref{f:leftover}. Below, we recall a further
generalization where the side information $Z$ available to
the eavesdropper consists of a finite-valued random variable
$Z_1$ and a continuous-valued random variable $Z_2$; see,
for instance,~\cite[Appendix B]{HayTyaWat14ii}) for a proof.
\begin{figure}[H]
\centering \includegraphics[scale=.4]{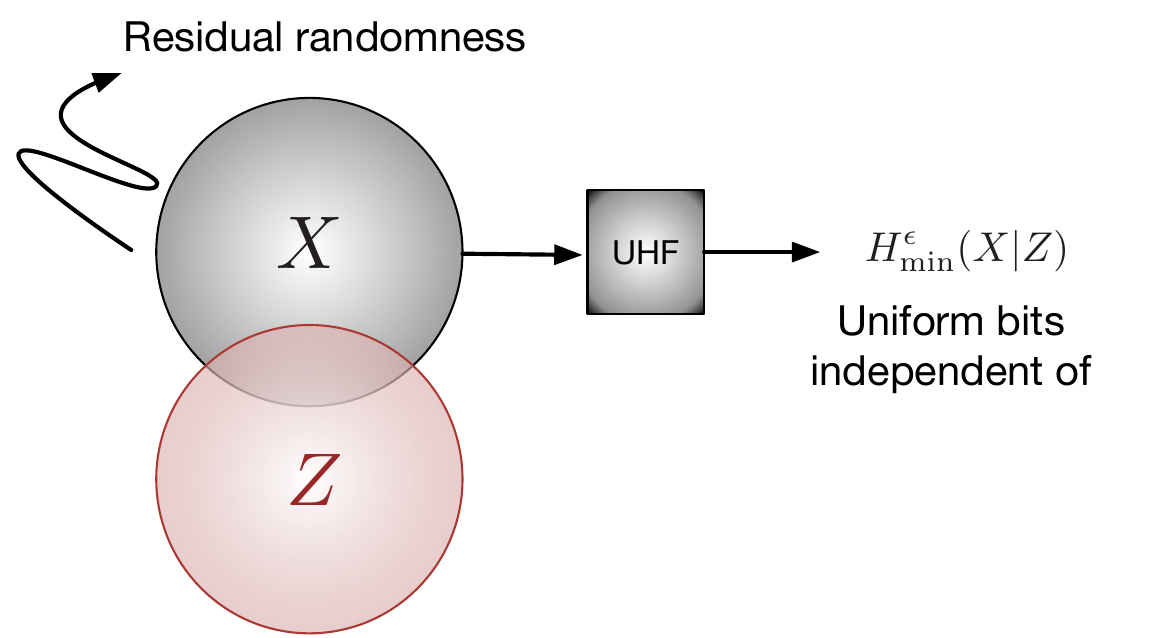}
\caption{Leftover hash property of UHFs}
\label{f:leftover}
\end{figure}

\begin{lemma}[{{\bf Leftover hash}}]\label{l:leftover_hash}
Consider random variables $X, Z_1, Z_2$ taking values,
respectively, in a finite set $X$, a (possibly uncountable) set $\cZ_1$, and
a finite set $\cZ_2$. Then, for a $k$-bit
UHF consisting of mappings $\{f_s, s\in \cS\}$ and a random
seed $S$ distributed uniformly over the set $\cS$, it holds
for every $\ep \in [0, 1)$ that
\[
\|\bPP {f_S(X)Z_1Z_2S} - \bPP {\mathtt{unif}}\times \bPP
       {Z_1Z_2S}\|_1 \leq\ep+ \frac 12
       \sqrt{|\cZ_2|2^{k-H_{\min}^\ep(\bPP{XZ_1}|Z_1)}}.
\]
\end{lemma}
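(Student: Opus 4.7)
The plan is to derive Lemma~\ref{l:leftover_hash} by invoking the continuous-side-information version of leftover hash (stated in the paragraph preceding the lemma) applied to the combined side information $(Z_1, Z_2)$, and then absorbing the extra $|\cZ_2|$ factor via a chain rule that relates $H_{\min}^\ep(\bPP{XZ_1Z_2}|Z_1Z_2)$ back to the quantity $H_{\min}^\ep(\bPP{XZ_1}|Z_1)$ appearing in the statement.

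First, I would treat the joint variable $(Z_1, Z_2)$ as a single side-information random variable taking values in $\cZ_1 \times \cZ_2$, with densities taken with respect to the product of $\mu$ and the counting measure on $\cZ_2$. Then the earlier continuous-side-information form of leftover hash yields
\[
\|\bPP{f_S(X)Z_1Z_2S} - \bPP{\mathtt{unif}}\times \bPP{Z_1Z_2S}\|_1 \leq \ep + \tfrac12 \sqrt{2^{k - H_{\min}^\ep(\bPP{XZ_1Z_2}|Z_1Z_2)}}.
\]
This shifts the task to showing
\[
H_{\min}^\ep(\bPP{XZ_1Z_2}|Z_1Z_2) \geq H_{\min}^\ep(\bPP{XZ_1}|Z_1) - \log|\cZ_2|,
\]
after which the claim follows by pulling $|\cZ_2|$ outside the square root.

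For the chain rule, I would start with the unsmoothed version. Given any probability measure $\bQQ{Z_1}$ admissible for the outer sup in $H_{\min}(\bPP{XZ_1}|Z_1)$, define a probability measure $\bQQ{Z_1Z_2}$ by $f_{\bQQ{Z_1Z_2}}(z_1,z_2) = f_{\bQQ{Z_1}}(z_1)/|\cZ_2|$ (the extra $|\cZ_2|$ is what forces $\bQQ{Z_1Z_2}$ to be normalized). A direct computation using the definition
\[
H_{\min}(\bPP{XZ_1Z_2}|\bQQ{Z_1Z_2}) = \inf_{x,z_1,z_2} -\log \frac{\bPP{X|Z_1Z_2}(x|z_1,z_2) f_{\bPP{Z_1Z_2}}(z_1,z_2)}{f_{\bQQ{Z_1Z_2}}(z_1,z_2)}
\]
together with the pointwise bound $\bPP{XZ_1Z_2}(x,z_1,z_2) \leq \bPP{XZ_1}(x,z_1)$ yields $H_{\min}(\bPP{XZ_1Z_2}|\bQQ{Z_1Z_2}) \geq H_{\min}(\bPP{XZ_1}|\bQQ{Z_1}) - \log|\cZ_2|$. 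Taking the supremum over $\bQQ{Z_1}$ gives the unsmoothed chain rule. To lift this to the smoothed version, I would take a near-optimal $\bQQ{XZ_1}$ with $\|\bPP{XZ_1} - \bQQ{XZ_1}\|_1 \leq \ep$ and extend it to $\bQQ{XZ_1Z_2}(x,z_1,z_2) = \bPP{Z_2|XZ_1}(z_2|x,z_1)\,\bQQ{XZ_1}(x,z_1)$; this extension has the same total variation distance to $\bPP{XZ_1Z_2}$ as $\bQQ{XZ_1}$ has to $\bPP{XZ_1}$, so it is admissible in the outer smoothing, and the unsmoothed chain rule applied to it completes the argument.

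The main obstacle is the careful bookkeeping in the second step: the definition of conditional min-entropy used in the paper involves an outer supremum over auxiliary measures $\bQQ Z$, and one must be cautious about (a) choosing the reference measure on $\cZ_1\times\cZ_2$ so that densities are well-defined, (b) ensuring $\bQQ{Z_1Z_2}$ remains a probability measure (which is where the factor $1/|\cZ_2|$ enters, and hence where the $|\cZ_2|$ ultimately comes from), and (c) verifying that the smoothing extension preserves the $\ep$-ball structure. Once those bookkeeping points are settled, the rest is a straightforward invocation of the already established leftover hash lemma.
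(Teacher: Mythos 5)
Your proposal is correct, but note that the paper does not actually prove this lemma: it states it and points to an external reference (Appendix~B of the cited Hayashi--Tyagi--Watanabe paper) for the proof. So there is no in-paper argument to match; the natural comparison is with that reference, where the $\sqrt{|\cZ_2|}$ factor typically emerges directly from a Cauchy--Schwarz (collision-probability) step over the finite alphabet $\cZ_2$ inside a from-scratch proof. Your route is different and arguably cleaner given what the paper has already set up: you take the continuous-side-information form of the leftover hash lemma (stated in the preceding paragraph with threshold $H_{\min}^\ep(\bPP{XZ}|Z)$) as a black box, apply it to $Z=(Z_1,Z_2)$, and then prove the chain rule $H_{\min}^\ep(\bPP{XZ_1Z_2}|Z_1Z_2)\geq H_{\min}^\ep(\bPP{XZ_1}|Z_1)-\log|\cZ_2|$. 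Your chain-rule argument checks out against the paper's definitions: the auxiliary measure $f_{\bQQ{Z_1Z_2}}=f_{\bQQ{Z_1}}/|\cZ_2|$ is normalized and satisfies the support condition, the pointwise bound $\bPP{XZ_1Z_2}(x,z_1,z_2)\leq\bPP{XZ_1}(x,z_1)$ gives the unsmoothed inequality after taking the supremum over $\bQQ{Z_1}$, and the extension $\bQQ{XZ_1Z_2}=\bPP{Z_2|XZ_1}\bQQ{XZ_1}$ preserves the total variation distance exactly (summing over $z_2$ recovers $\|\bPP{XZ_1}-\bQQ{XZ_1}\|_1$), so it stays in the $\ep$-ball. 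The only caveat is that your derivation is not self-contained within the paper either, since the continuous-$Z$ version you invoke is itself only quoted from Renner--Wolf; but that is consistent with how the paper treats all of these prerequisite forms, and the reduction you give is a legitimate and standard way to obtain the stated bound with exactly the claimed constant.
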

In essence, the result above says that $H_{\min}^\ep(\bPP{XZ_1}|Z_1) - \log
|\cZ_2|$ almost uniform bits which are almost independent of
$(Z_1, Z_2)$ can be extracted from $X$. To measure
``almost'' uniformity and independence, the results above
use the total variation distance. An alternative form of the
leftover
hash lemma, with the K-L divergence replacing the variation
distance, was derived in \cite{BenBraCreMau95} and is
reviewed below.
\begin{lemma}[{{\bf Leftover hash: Divergence form}}]
Consider random variables $X, Z$ taking values,
respectively, in a finite set $X$ and a (possibly
uncountable) set $\cZ$. Then, for a $k$-bit UHF consisting
of mappings $\{f_s, s\in \cS\}$ and a random seed $S$
distributed uniformly over the set $\cS$, it holds that
\[
D\left(\bPP {f_S(X)ZS} \| \bPP {\mathtt{unif}}\times \bPP
{ZS}\right) \leq \frac{2^{k-H_{\min}(\bPP{XZ}|\bPP Z)}} {\ln
  2}.
\]
\end{lemma}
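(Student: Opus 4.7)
The plan is to pass from the KL divergence to a second-moment quantity (essentially a $\chi^2$-type bound), evaluate that quantity as a weighted collision probability of $K = f_S(X)$ given $(Z, S)$, and then use the UHF property together with conditional min-entropy to bound each of the two resulting terms.

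\textbf{Step 1 (from KL to likelihood ratio).} I would first apply Jensen's inequality to $\log$, which is concave, to write
\[
D(\bPP{KZS} \| \bPP{\mathtt{unif}}\times\bPP{ZS})
\;\leq\; \log\, \bE{\bPP{KZS}}{\frac{\bPP{KZS}}{\bPP{\mathtt{unif}}\times\bPP{ZS}}}.
\]
Since the right-hand expectation is $\geq 1$ (Cauchy--Schwarz), I can then use $\log(1+t) \leq t/\ln 2$ to obtain
\[
D(\bPP{KZS}\|\bPP{\mathtt{unif}}\times\bPP{ZS})
\;\leq\; \frac{1}{\ln 2}\Bigl(\bE{\bPP{KZS}}{\bPP{KZS}/(\bPP{\mathtt{unif}}\times\bPP{ZS})} - 1\Bigr).
\]

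\textbf{Step 2 (collision probability representation).} Writing out the expectation and using $\bPP{\mathtt{unif}}(u) = 2^{-k}$, the quantity becomes
\[
2^{k}\, \bEE_{Z,S}\Bigl[\sum_{u} \bPP{K|Z,S}(u\,|\,Z,S)^{2}\Bigr] - 1,
\]
which is $2^{k}$ times the expected conditional collision probability of $K = f_S(X)$. Since $X$ is independent of $S$, conditioning on $(Z,S) = (z,s)$ gives two independent copies $X_1, X_2$ drawn from $\bPP{X|Z}(\cdot\,|\,z)$, and the collision probability equals $\sum_{x_1, x_2} \bPP{X|Z}(x_1|z)\bPP{X|Z}(x_2|z)\, \indicator{f_s(x_1) = f_s(x_2)}$.

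\textbf{Step 3 (apply the UHF property).} I would split the double sum over $X_1, X_2$ into the diagonal $x_1 = x_2$ and the off-diagonal $x_1 \neq x_2$. Averaging over the uniform seed $S$ and invoking the defining UHF property \eqref{e:UHF_property} on the off-diagonal terms yields, for each $z$,
\[
\bEE_{S}\!\Bigl[\sum_{u}\bPP{K|Z,S}(u|z,S)^{2}\Bigr]
\;\leq\; \sum_{x}\bPP{X|Z}(x|z)^{2} + 2^{-k}.
\]

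\textbf{Step 4 (bound by conditional min-entropy).} The diagonal (collision) sum is controlled by
\[
\sum_{x}\bPP{X|Z}(x|z)^{2} \;\leq\; \max_{x}\bPP{X|Z}(x|z),
\]
and the definition of $H_{\min}(\bPP{XZ}\,|\,\bPP Z)$ (with $\bQQ Z = \bPP Z$, for which the density ratio is $1$) gives $\bPP{X|Z}(x|z) \leq 2^{-H_{\min}(\bPP{XZ}\,|\,\bPP Z)}$ for every $x$ and every $z \in \mathtt{supp}(\bPP Z)$. Taking expectation over $Z$ therefore shows
\[
\bEE_{Z}\Bigl[\sum_{x}\bPP{X|Z}(x|Z)^{2}\Bigr] \;\leq\; 2^{-H_{\min}(\bPP{XZ}\,|\,\bPP Z)}.
\]
Substituting into Steps 1--3 gives
\[
D(\bPP{KZS}\|\bPP{\mathtt{unif}}\times\bPP{ZS})
\;\leq\; \frac{1}{\ln 2}\Bigl(2^{k}\bigl(2^{-H_{\min}(\bPP{XZ}\,|\,\bPP Z)} + 2^{-k}\bigr) - 1\Bigr)
\;=\; \frac{2^{k - H_{\min}(\bPP{XZ}\,|\,\bPP Z)}}{\ln 2},
\]
which is the claimed bound.

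The only genuine obstacle is the continuous-valued $Z$: the identification $\bPP{X|Z}(x|z) \leq 2^{-H_{\min}(\bPP{XZ}|\bPP Z)}$ must be read from the density-ratio definition with $\bQQ Z = \bPP Z$, and the sums over $z$ should be reinterpreted as integrals against $\mu$ using the density $f_{\bPP{}}$; all three inequalities used above carry through in this setting without modification, so no new ideas are needed beyond careful bookkeeping of the Radon--Nikodym derivatives.
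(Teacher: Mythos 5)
Your proof is correct. The paper itself does not prove this lemma --- it only cites \cite{BenBraCreMau95} --- and your argument is essentially the standard one from that reference: bound the divergence by Jensen's inequality and $\ln(1+t)\leq t$, identify the resulting quantity as $2^k$ times the expected conditional collision probability of $f_S(X)$, split the double sum into diagonal and off-diagonal terms, and kill the off-diagonal contribution with the UHF property so that the $-1$ cancels exactly against the $2^k\cdot 2^{-k}$ term. The one place where you give something slightly weaker than the original is Step 4: \cite{BenBraCreMau95} keeps the collision sum $\sum_x \bPP{X|Z}{x|z}^2$ and states the bound in terms of the conditional R\'enyi entropy of order $2$, whereas you relax it to $\max_x \bPP{X|Z}{x|z}$ to land on the min-entropy form that the paper's statement uses; since $H_{\min}\leq H_2$, this is exactly the right relaxation for the claimed inequality, and your reading of $H_{\min}(\bPP{XZ}|\bPP Z)$ from the density-ratio definition with $\bQQ Z=\bPP Z$ (ratio identically $1$) is the correct one. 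The bookkeeping for continuous $Z$ is as routine as you say.
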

Note that by Pinsker's inequality ($cf.$~\cite{CsiKor11}),
the K-L divergence form yields the total variation distance form
(up to a constant factor).  On the other hand, using the
continuity of entropy in total variation distance, a K-L
divergence form was derived using the total variation distance
form above in \cite{Hay11} (see, also, \cite[Lemma
  1]{CsiNar04}).  Both the total variation distance form and the
K-L divergence form of the leftover hash lemma
given above combine the requirement of almost uniformity of
$f_S(X)$ and security of $f_S(X)$ from an observer of
$(Z,S)$ into a single criterion.  In fact, the result in
\cite{BenBraCreMau95} shows that $k - H(f_S(X)|ZS) = k -
H(f_S(X)) + I(f_S(X)\wedge ZS)$ is bounded above by
$2^{k-H_{\min}(\bPP{XZ}|\bPP Z)}/ {\ln 2}$, which in turn
implies that the mutual information $I(f_S(X)\wedge ZS)$ is
bounded above by the same quantity\footnote{The quantity $k
  - H(f_S(X)|ZS)$ was defined as a security index in
  \cite{CsiNar04} and was noted to equal $D\left(\bPP
       {f_S(X)ZS} \| \bPP {\mathtt{unif}}\times \bPP
       {ZS}\right)$.}.  In the information
theory literature, traditionally, mutual information has
been used as a measure of information
leakage\footnote{Bounds on leakage measured by R\'enyi
information quantities were derived recently in \cite{HayTan15}.}
($cf.$~\cite{Sha49, Wyn75ii, CsiKor78}), and the result above
says that the information about $f_S(X)$ leaked to the
eavesdropper is small as long as $k$ is sufficiently smaller
than $H_{\min}(\bPP{XZ}|\bPP Z)$. It was noted in \cite[Appendix
  III]{Hay06} that, under an almost uniformity assumption
for $f_S(X)$, a bound on $I(f_S(X)\wedge ZS)$ yields a bound
on the total variation distance $\|\bPP {f_S(X)ZS} - \bPP
{\mathtt{unif}}\times \bPP {ZS}\|_1 $.  On the other hand, a
counterexample was given to show
that a small $\|\bPP {f_S(X)ZS} - \bPP {\mathtt{unif}}\times
\bPP {ZS}\|_1 $ need not guarantee a small $I(f_S(X)\wedge
ZS)$.

In practice, one is interested in characterizing the optimal
tradeoff between information leakage and the range-size $k$
of the UHF used. For the case of IID observations $X^n$,
\cite{Hay08} considered the optimal $k = k_n(\ep)$ required
to attain a given leakage $\ep$ as a function of $n$ and
studied the second-order asymptotic term, for both the
total variation distance and the K-L divergence criteria (see the
textbook \cite{Han03} and the references therein for a
treatment of general sources beyond IID).  In a different
regime, \cite{Hay11} studied the exponential decrease in the
leakage for increasing $n$, for a fixed rate $k = nR$ for
the total variation distance based leakage and the mutual
information leakage, with a focus on the latter; optimal
exponents for decay rate of the total variation distance based
leakage as a function of $n$ were obtained in \cite{Hay13}.

\subsection{Channel models}
Another class of models relevant for the wiretap channel
entails
a channel $V:\cX \rightarrow \cZ$ between the legitimate
party and the eavesdropper. For each input $x \in \cX$
selected by the
legitimate party, the eavesdropper observes a random
variable $Z\in \cZ$ with distribution $V_x$. The goal
is to determine a stochastic map (a channel) $\Gamma: \cM
\rightarrow \cX$ such that for the composite channel 
$V^\prime = V\circ \Gamma$, with inputs from $\cM$ and 
outputs in $\cZ$,  it holds that 
\begin{itemize}
\item[(i)] For a  {uniformly distributed}
 input $M$
  of $\Gamma$, the
random variable $M$ is almost independent of the output $Z$ of $V^\prime$ 
observed by the eavesdropper; and 
\item[(ii)] $M$ can be determined
from the input $X$ of $V$. 
\end{itemize}
Note that in the
source model discussed in the previous section, the
distribution of $X$ is
fixed and a uniformly distributed $M$ is obtained as $F(X)$,
the output of a randomly chosen member $F$ of a UHF. In
contrast, in the channel model we fix the distribution of
$M$
and seek to design $\Gamma$ such that the two properties above
hold. Here, too, a constructive scheme can be obtained using
a UHF satisfying certain ``balanced'' conditions.
Specifically, we consider a UHF $\{f_s: \cX\rightarrow
\{0,1\}^k, s\in \cS\}$ satisfying the following 
 {balanced condition}:
For every seed $s\in \cS$ and $m \in \{0,1\}^k$,
\begin{align}
|\{x\in \cX \mid f_s(x) = m\}| = 2^{b}.
\nonumber
\end{align}
The  {condition above} says that for each member of the
UHF, the cardinality of each inverse-image set is the same.
We call a UHF satisfying the condition above a
 {$b$-balanced} UHF.  

A  {$b$-balanced} UHF can be used to design 
the aforementioned stochastic map $\Gamma:\cM \rightarrow \cX$
as follows:
For each $m\in \cM$, choose $X$ uniformly over
$f_S^{-1}(m)$, where the random seed $S$ is chosen uniformly
over $\cS$. The next result is a counterpart of the leftover
hash lemma for the channel model and shows that the requirement
(i) above holds if $b$ is
less than a threshold. The first instance of such a result 
appears in 
\cite[Section V]{HayMat10}. The weaker version below 
uses a different threshold which is often easier to 
evaluate. Specifically, the threshold in the lemma below is 
given by the {\it smooth max-information} of the channel, which is defined as follows:
Consider a subnormalized channel $V:\cX \rightarrow \cZ$
with a finite input alphabet $\cX$ 
and such that for each $x\in \cX$ the measure $V(\cdot \mid
x)$ on $\cZ$ has a density  
$\omega(z|x)$ with respect to a measure $\mu$ on $\cZ$. The
{\it max-information}  
of $V$ is given by
\[
\maxI V = \log \int \max_{x\in \cX} \omega(z|x)\, d\mu.
\]

For a subset $\cT$ of $\cX\times \cZ$, denote by 
$V_\cT$ the subnormalized channel corresponding to the
density
\begin{align}
\omega_\cT (z\mid x) = 
\begin{cases}
\omega(z|x), \quad &(x,z)\in \cT,
\\
0, \quad &\text{otherwise}.
\end{cases}
\label{e:smoothing_def}
\end{align}
The $\ep$-smooth max-information of $V$, $\maxIep V$, is given by the infimum
of $\maxI{V_\tau}$ over all sets $\cT \subset \cX\times \cZ$ such that
\begin{align}
V(\{z: (x,z)\in \cT\} \mid x) \geq 1-\ep, \quad \text{for
  all } x\in \cX.
\label{e:smoothing_property}
\end{align}
Note that the smoothing operation in the definition of
smooth max-information is different from the one used in defining {smooth
  max-entropy} above, but is similar to the definition of
smoothing in \cite{RenWol05}.

\begin{lemma}[{{\bf Leftover hash: Channel model}}]\label{l:leftover_hash_channel}
Given a channel $V: \cX\rightarrow Z$, with a finite input
set $\cX$ and arbitrary output set $\cZ$, and an 
 {$b$-balanced}  $k$-bit UHF
$\{f_s: s\in \cS\}$, suppose 
  that for each    {$m \in \cM=\{0,1\}^k$} 
 and $s\in \cS$ the input $X$ of $V$ is chosen
  uniformly over $f_s^{-1}(m)$. Then, for a
   {random variable $M$ distributed uniformly
    on $\cM$}    , 
\begin{align}
 \MI M {Z,S} \leq \frac{1}{\ln
  2}\cdot 2^{-(b - \maxIep V)} +  {\ep k}, \text{
   }
\label{e:inverseLHLbound}
\end{align}
where the seed $S$ is distributed uniformly over $\cS$.
\end{lemma}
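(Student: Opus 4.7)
The plan is to (i) exploit the $b$-balanced structure of the UHF to reduce the quantity of interest to a conditional mutual information with $X$ uniformly distributed over $\cX$; (ii) bound this conditional mutual information by a $\chi^2$-type inequality combined with the defining collision property of the UHF; and (iii) smooth the resulting estimate to replace $\maxI V$ with $\maxIep V$, paying the $\ep k$ penalty on the complement of the smoothing set.

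First I would note that since each $f_s$ partitions $\cX$ into $|\cM|=2^k$ classes of equal size $2^b$, drawing $M$ uniformly on $\cM$ and $X$ uniformly on $f_S^{-1}(M)$ makes $X$ uniformly distributed on $\cX$, with $S$ uniform on $\cS$ and independent of $(X,M)$ and $M=f_S(X)$ uniform on $\cM$. Consequently $\MI M {Z,S} = \MI M Z \mid S)$ (reading the expression as conditional MI), and for each $s\in\cS$ the joint density is $f_{MZ\mid S=s}(m,z) = 2^{-(k+b)} \sum_{x:\,f_s(x)=m} \omega(z\mid x)$, while the marginal $f_{Z\mid S=s}(z) = 2^{-(k+b)}\sum_x\omega(z\mid x) =: f_Z(z)$ does not depend on $s$.

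Next, applying the elementary bound $D(p\|q)\le \chi^2(p,q)/\ln 2$ pointwise in $s$ and averaging over $S$ gives
\[
\MI M Z \mid S \;\le\; \frac{2^k}{\ln 2}\int \sum_m \frac{\bEE_S\bigl[f_{MZ\mid S=s}(m,z)^2\bigr]}{f_Z(z)}\, d\mu(z) \;-\; \frac{1}{\ln 2}.
\]
Expanding the square as a double sum over $x,x'$, summing over $m$ to produce the collision indicator $\indicator{f_S(x)=f_S(x')}$, and invoking the UHF property $\Pr_S(f_S(x)=f_S(x'))\le 2^{-k}$ for $x\ne x'$, the off-diagonal part cancels exactly with the $-1/\ln 2$ term, leaving only the diagonal:
\[
\MI M Z \mid S \;\le\; \frac{2^{-k-2b}}{\ln 2}\int \frac{\sum_x \omega(z\mid x)^2}{f_Z(z)}\, d\mu(z).
\]
Bounding $\sum_x \omega(z\mid x)^2 \le (\max_x \omega(z\mid x))\cdot \sum_x \omega(z\mid x) = 2^{k+b} f_Z(z)\cdot \max_x \omega(z\mid x)$ and integrating yields the unsmoothed estimate $\MI M Z \mid S \le \frac{1}{\ln 2}\cdot 2^{-(b-\maxI V)}$, which is the claim with $\maxIep V$ replaced by $\maxI V$ and no $\ep k$ term.

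Finally, to replace $\maxI V$ with $\maxIep V$ I would smooth. Fix $\cT\subset\cX\times\cZ$ satisfying \eqref{e:smoothing_property} with $\maxI{V_\cT}$ close to $\maxIep V$, and let $F = \indicator{(X,Z)\notin\cT}$, so $\Pr(F=1)\le\ep$. Decomposing $\MI M {Z,S} \le \MI M {Z,S,F} = \MI M F + I(M\wedge Z,S\mid F)$, the bad-event contribution is controlled by $\Pr(F=1)\,I(M\wedge Z,S\mid F=1)\le \ep\cdot H(M)=\ep k$, while on $\{F=0\}$ the second-moment computation can be re-run with $\omega$ replaced by the restricted density $\omega_\cT$ to yield the $\frac{1}{\ln 2}\cdot 2^{-(b-\maxIep V)}$ term. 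I expect the main obstacle to be this re-run: conditioning on $F=0$ biases the conditional law of $X$ away from uniform (it becomes proportional to $V(\{z:(x,z)\in\cT\}\mid x)$), so the $b$-balanced and UHF-collision steps have to be re-done with slightly perturbed marginals, and the small residue $\MI M F\le h(\ep)$ must be absorbed into the $\ep k$ budget without loss.
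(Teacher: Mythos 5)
Your unsmoothed ($\ep=0$) argument is correct and is essentially the paper's computation in different clothing: the paper reaches the same bound $\frac{1}{\ln 2}\cdot 2^{-(b-\maxI V)}$ by applying Jensen's inequality to move the average over $S$ inside the logarithm, invoking the collision property \eqref{e:UHF_property}, and then using $\ln(1+x)\le x$; your route via $D\le\chi^2/\ln 2$, expanding the second moment and extracting the collision indicator, performs the same cancellations in a different order and lands on the identical diagonal term $\frac{2^{-b}}{\ln 2}\int \frac{\sum_x \denW z x^2}{\sum_{x''}\denW z {x''}}\mu(dz)$.

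The smoothing step is where the proposal genuinely diverges, and it has a gap that you have sensed but not closed. First, the decomposition $\MI M {Z,S}\le I(M\wedge F)+I(M\wedge Z,S\mid F)$ leaves the term $I(M\wedge F)\le H(F)$, which is the binary entropy of $\bPr{F=1}$ and is \emph{not} bounded by $\ep k$ in general (for $k=1$ and $\ep=0.1$ it already exceeds $\ep k$ several-fold; it is only absorbable when $k\gtrsim\log(1/\ep)$), so the claimed constants are not recovered. Second, and more seriously, the $F=1$ branch already consumes the entire $\ep k$ budget via $\bPr{F=1}H(M\mid F=1)\le\ep k$, while on $\{F=0\}$ the conditional law of $X$ given $(M,S)$ is no longer uniform on $f_s^{-1}(m)$ and $M$ is no longer exactly uniform, so the $b$-balanced identities \eqref{e:cond_density1}--\eqref{e:cond_density2} and the UHF collision step do not re-run verbatim; carrying the perturbation through costs multiplicative factors of order $(1-\ep)^{-O(1)}$ and further additive corrections not present in \eqref{e:inverseLHLbound}. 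The paper avoids conditioning altogether: it splits the integrand of \eqref{e:MI_bound0} pointwise in $(s,m,z)$ using the log-sum inequality, separating the sums over $x$ (both outside and inside the logarithm) into their $\cT$ and $\cT^c$ portions. The $\cT$ portion is then bounded exactly as in the $\ep=0$ case with the subnormalized channel $V_\cT$ in place of $V$, and the $\cT^c$ portion is bounded trivially by $k\cdot\bPr{(X,Z)\in\cT^c}\le\ep k$ because the logarithm there never exceeds $k$. If you wish to keep an event-conditioning argument you must quantify all of the above losses; the pointwise log-sum split is the cleaner device and is what delivers the bound as stated.
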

\begin{proof}
Consider a $b$-balanced UHF $\{f_s, s\in \cS\}$.
We first prove the bound in \eqref{e:inverseLHLbound} for
the special case of $\ep=0$. To that end, note first that the
conditional density of $Z$ (w.r.t. $\mu)$ given $M=m$ and 
$S=s$ is given by 
\begin{align}
\frac{d \bPP{Z|M, S}}{d\mu}(z|m,s) &= \sum_{x\in
  f_s^{-1}(m)} \frac1{|f_s^{-1}(m)|}\cdot \denW Z x 
\nonumber
\\
&= 2^{-b}\sum_{x\in \cX}\indicator{f_s(x)=m} \denW z x,
\label{e:cond_density1}
\end{align}
where the equality is by definition of a $b$-balanced UHF.
Similarly, since $M$ and $S$ are independent
and $M$ is distributed uniformly over $\{0,1\}^k$, the
conditional density of $Z$ (w.r.t. $\mu)$ given 
$S=s$ is given by  
\begin{align}
\frac{d \bPP{Z|S}}{d\mu}(z|s) &=
2^{-b-k}\sum_{m\in \cM}\sum_{x\in
  \cX}\indicator{f_s(x)=m} \denW z x
\nonumber
\\
&=
2^{-b-k}\sum_{x\in
  \cX} \denW z x,
\label{e:cond_density2}
\end{align}
where we have used $\sum_{m\in \cM}\indicator{f_s(x)=m}=1$.
By \eqref{e:cond_density1} and
\eqref{e:cond_density2}, we get
\begin{align}
&\MI M {Z,S}
\\
&= \CMI MZS
\nonumber
\\
&= \bEE \log \frac{d\bPP{Z|M, S} }{d\bPP{Z|S}}
\nonumber
\\
&= \bEE \log \frac
{2^{-b}\sum_{x^\prime\in \cX}\indicator{f_S(x^\prime)=M} \denW Z {x^\prime}}
{2^{-b-k}\sum_{x^{\prime \prime}\in
  \cX} \denW Z {x^{\prime\prime}}}
\nonumber
\\
&= \bEE \log \frac
{2^k\sum_{x^\prime\in \cX}\indicator{f_S(x^\prime)=M} \denW Z {x^\prime}}
{\sum_{x^{\prime\prime}\in \cX} \denW Z {x^{\prime\prime}}}
\nonumber
\\
&= \frac{2^{-b-k}}{|\cS|}\int_{\cZ}\sum_{s\in \cS, m\in \cM,
x\in \cX} \Bigg[\indicator{f_s(x)=m} \denW z x \log \frac
{2^k\sum_{x^\prime\in \cX}\indicator{f_s(x^\prime)=m} \denW z {x^\prime}}
{\sum_{x^{\prime\prime}\in \cX} \denW z
  {x^{\prime\prime}}}\Bigg]\mu(dz).
\label{e:MI_bound0}
\end{align}
Since the summand inside
$\big[\cdot\big]$ is nonzero only for $m = f_s(x)$, we
can replace the term $\indicator{f_s(x^\prime)=m}$ with
$\indicator{f_s(x^\prime)=f_s(x)}$ to obtain 
\begin{align}
&\MI M {Z,S} 
\nonumber
\\
&= \frac{2^{-b-k}}{|\cS|}\int_{\cZ}\sum_{s\in \cS, m\in \cM,
x\in \cX} \indicator{f_s(x)=m} \denW z x 
 \log \frac
{2^k\sum_{x^\prime\in \cX}\indicator{f_s(x^\prime)=f_s(x)} \denW z {x^\prime}}
{\sum_{x^{\prime\prime}\in \cX} \denW z
  {x^{\prime\prime}}}\mu(dz)
\nonumber
\\
&= \frac{2^{-b-k}}{|\cS|}\int_{\cZ}\sum_{s\in \cS,
x\in \cX}  \denW z x \log \frac
{2^k\sum_{x^\prime\in \cX}\indicator{f_s(x^\prime)=f_s(x)} \denW z {x^\prime}}
{\sum_{x^{\prime\prime}\in \cX} \denW z
  {x^{\prime\prime}}}\mu(dz)
\nonumber
\\
&\leq 2^{-b-k}\int_{\cZ}\sum_{x\in \cX}  \denW z x  
\log \frac
{2^k\sum_{x^\prime\in \cX}|\cS|^{-1}\sum_{s\in\cS}\indicator{f_s(x^\prime)=f_s(x)} \denW z {x^\prime}}
{\sum_{x^{\prime\prime}\in \cX} \denW z
  {x^{\prime\prime}}}\mu(dz),
\label{e:MI_bound}
\end{align}
where the last inequality is by Jensen's inequality applied
to the $\log$ function. Furthermore, using the UHF property 
\eqref{e:UHF_property} for the UHF $\{f_s, s\in \cS\}$ we
have
\[
\frac 1{|\cS|}\sum_{s\in\cS}\indicator{f_s(x^\prime)=f_s(x)}
\leq 2^{-k}\indicator{x^\prime \neq x} + \indicator{x^\prime
  = x},
\]
which along with \eqref{e:MI_bound} gives
\begin{align}
&\MI M {Z,S} 
\nonumber
\\
&\leq 2^{-b-k}\int_{\cZ}\sum_{x\in \cX}  \denW z x
\log \frac
{2^k\sum_{x^\prime\in \cX}\left( 
2^{-k}\indicator{x^\prime \neq x} + \indicator{x^\prime
  = x}
\right)\denW z {x^\prime}}
{\sum_{x^{\prime\prime}\in \cX} \denW z
  {x^{\prime\prime}}}\mu(dz)
\nonumber
\\
&\leq 2^{-b-k}\int_{\cZ}\sum_{x\in \cX}  \denW z x  \log \frac
{\sum_{x^\prime\in \cX} \denW z {x^\prime}
+ 2^{k} \denW z x
}
{\sum_{x^{\prime\prime}\in \cX} \denW z
  {x^{\prime\prime}}}\mu(dz)
\nonumber
\\
&= 2^{-b-k}\int_{\cZ}\sum_{x\in \cX}  \denW z x \log \left( 1+ \frac
{ 2^{k} \denW z x}
{\sum_{x^{\prime\prime}\in \cX} \denW z {x^{\prime\prime}}}\right)\mu(dz)
\nonumber
\\
&\leq \frac{2^{-b}}{\ln 2}\int_{\cZ}\frac
{ \sum_{x\in \cX} {\denW z x}^2}
{\sum_{x^{\prime\prime}\in \cX} \denW z {x^{\prime\prime}}}\mu(dz),
\nonumber
\end{align}
where the previous inequality uses $\ln (1+x)\leq x$ for
all $x\geq 0$. Therefore, on observing that
\[
\frac
{ \sum_{x\in \cX} {\denW z x}^2}
{\sum_{x^{\prime\prime}\in \cX} \denW z {x^{\prime\prime}}}
\leq \max_{x}\denW z x = 2^{\maxI W},
\]
we get
\begin{align*}
\MI M {Z,S} \leq \frac1 {\ln 2}\cdot 2^{-\left(b-\maxI W\right)},
\end{align*}
which completes the proof for the case $\ep =0$.

Moving to the case $\ep>0$, consider a set $\cT \subset
\cX\times \cZ$ satisfying \eqref{e:smoothing_property}.
Note that by log-sum inequality
\begin{align}
&\sum_{x\in \cX}\indicator{f_s(x)=m} \denW z x \log \frac
{2^k\sum_{x^\prime\in \cX}\indicator{f_s(x^\prime)=m} \denW z {x^\prime}}
{\sum_{x^{\prime\prime}\in \cX} \denW z
  {x^{\prime\prime}}}
\nonumber
\\
&\leq
\sum_{x\in \cX: (x,z)\in \cT}\indicator{f_s(x)=m} \denW z x \log \frac
{2^k\sum_{x^\prime\in \cX: (x^\prime,z)\in \cT}\indicator{f_s(x^\prime)=m} \denW z {x^\prime}}
{\sum_{x^{\prime\prime}\in \cX: (x^{\prime\prime},z)\in \cT} \denW z
  {x^{\prime\prime}}}
+
\nonumber
\\
&\quad \sum_{x\in \cX: (x,z)\in \cT^c}\indicator{f_s(x)=m} \denW z x \log \frac
{2^k\sum_{x^\prime\in \cX: (x^\prime,z)\in \cT^c}\indicator{f_s(x^\prime)=m} \denW z {x^\prime}}
{\sum_{x^{\prime\prime}\in \cX: (x^{\prime\prime},z)\in \cT^c} \denW z
  {x^{\prime\prime}}}
\label{e:log-sum}
\end{align}
Thus, upon denoting the right-side of \eqref{e:MI_bound0} by
$g(V)$, \eqref{e:MI_bound0} and \eqref{e:log-sum} give
\[
\MI M {Z, S}\leq g(V_\cT) + g(V_{\cT^c}),
\]
where $V_\cT$ and $V_{\cT^c}$ are defined in
\eqref{e:smoothing_def}. Proceeding as in
the $\ep=0$ case, we get
\[
g(V_\cT) \leq \frac 1 {\ln 2}\cdot2^{-b+\maxI {V_\cT}}.
\]
Furthermore, using the simple bound
\[
\log \frac
{2^k\sum_{x^\prime\in \cX: (x^\prime,z)\in \cT^c}\indicator{f_s(x^\prime)=m} \denW z {x^\prime}}
{\sum_{x^{\prime\prime}\in \cX: (x^{\prime\prime},z)\in \cT^c} \denW z
  {x^{\prime\prime}}}
\leq k
\]
we get
\[
g(V_{\cT^c}) \leq \bPr{(X,Z)\in \cT^c}k \leq
\ep k,
\]
where the previous inequality uses the assumption that $\cT$
satisfies \eqref{e:smoothing_property}. It follows upon combining the
inequalities above that
\[
\MI M {Z, S}\leq \frac 1{\ln 2}\cdot2^{-\left(b-\maxI {V_\cT}\right)} + \ep k.
\]
The proof is completed using the definition of $\ep$-smooth
max-information upon optimizing $\maxI {V_\cT}$ over
sets $\cT$ that satisfy \eqref{e:smoothing_property}.
\end{proof}
Thus, the leakage  {$\MI M {Z,S}$} is small as long as $b$ is much smaller than
$\maxIep V$.  As in the case of source model, here, too, it
is of interest to determine the optimal leakage
exponent. Furthermore, it is of interest to derive bounds on
leakage for other measures such as the total variation
distance measure\footnote{In applying this bound to the case
  of wiretap channel, the channel $V$ will be chosen to be
  the concatenation of the legitimate transmission channel
  and an error correcting code for it.}; one instance of
such bound is available in \cite{Hayashi13ii} for the
special case when the channel $V$ is given by a
concatenation of a random code and another transmission
channel.

\subsection{Implementations}
An efficient implementation of a $k$-bit UHF for an $l$-bit
input can be obtained as follows \cite{CarWeg79, Kra94}: Let
$\{0,1\}^l$ correspond to the elements of
$GF\left(2^l\right)$ and let $\cS = \{0,1\}^l\setminus
\{{\bf 0}\}$. For $k \leq l$, define a mapping $f: \cS
\times \{0,1\}^l \rightarrow \{0,1\}^k$ as follows:
\begin{align}
f(s,x) &= (s \mult x)_k,
\nonumber
\end{align}
where $(x)_k$ selects the $k$ most significant bits of
$x$. It is easy to see that the family of mappings $\{f_s(x)
:= f(s,x), s \in \cS\}$ constitutes a UHF.
In fact,  {it is easy to see that this UHF is
  a $(l-k)$-balanced UHF}. Furthermore,
for $m \in \cM=\{0,1\}^k$, a uniform distribution on the
inverse-image set $f_s^{-1}(m)$ (required in the channel
version of the leftover hash lemma) can be computed
efficiently, too, using the mapping $\phi(s,m,R) =
s^{-1}\mult (m\conc R),$ where $R$ denotes $(l-k)$ uniform
random bits and $(m \conc R)$ denotes the concatenation of
$m$ and $R$. Note that $\phi$ is indeed the inverse of $f$
since $f(s, \phi(s, m, r)) = m$ for every $s, m, r$.

Note that in order to implement the aforementioned UHF $f_s$
(and its inverse $\phi$) efficiently, we require an
efficient implementation of multiplication and inversion in
$GF(2^l)$. One such efficient implementation was given in
\cite{Silverman99} for special values of $l$. Specifically,
since the polynomial
\[
\Phi(X) = X^{l} + X^{l-1} + ... + X +1
\]
is irreducible in $GF(2)[X]$ if and only if
\begin{enumerate}
\item $l+1$ is prime, and
\item $2$ is a primitive root modulo $l+1$, $i.e.$, the
  powers $1, 2, 2^2, ..., 2^l$ are distinct modulo $l+1$,
\end{enumerate}
for the values of $l$ satisfying the two conditions above,
$GF(2^l)$ can be embedded as a subring of polynomials modulo
$X^{l+1} - 1$. In this case, the multiplication of two
elements in $GF(2^l)$ is tantamount to multiplying the
corresponding polynomials modulo $X^{l+1} - 1$, which in
turn corresponds to the convolution of the two binary
vectors of length $l$. As is well-known, this convolution
can be realized using $O(l\log l)$ computations using FFT,
and also on hardware using a  {linear finite shift register
(LFSR)}  of length $l$.  Also, the 
inverse of elements of $GF(2^l)$, too, can be computed
efficiently following the algorithm outlined in
\cite[Section 2.5]{Silverman99}.

The main limitation of the construction above is that it is
feasible only for selected values of $l$ satisfying the two
conditions above. However, this is perhaps not a severe
limitation since, if Artin's conjecture holds, the number of
such $l$s is infinite and one can identify such an $l$ of a
practically relevant order by running a simple computer code
\footnote{A list of first $110$ such $l$'s is available on
  {\it http://oeis.org/A001122}.}.

An alternative construction, which circumvents the
aforementioned limitation on the input length $l$, entails
using a randomly chosen Toeplitz matrix. Specifically, for a
random seed $S$ consisting of $(l+k-1)$ bits, the hash
function $f_S: \{0,1\}^l \rightarrow \{0,1\}^k$ is given by
a $k\times l$ matrix $A$ with the first row and the first
column consisting of elements of $S$ and $A_{i,j} = A_{i-1,
  j-1}$ for $1<i\leq k$ and $1<j\leq l$.  It was shown in
\cite{MansourNT90} that the family of mappings $f_s(x) = Ax,
s\in \{0,1\}^{l+k-1},$ constitutes a $k$-bit UHF for inputs
of length $l$. Note that we can view the multiplication of
an $l$-length vector $x$ with a Toeplitz matrix $A$ as
multiplying the extended $(l+k-1)$-length vector
$\overline{x} = (x_1,...,x_l, 0,0,0 ...,0)$ with the
circulant extension of $A$ and taking the first $k$ entries
\cite{Kra94, HayT15}. Thus, we can efficiently implement
this UHF since multiplication with a circulant matrix is the
same as convolution, which in turn can be computed
efficiently using FFT.

A simple modification of the Toeplitz matrix based UHF above
was given in \cite{Hay11} for which the inverse-image set
can be efficiently computed as well.  In this modified
version, the random seed $S$ consisting of $(l-1)$ bits is
used first to form a $k\times (l-k)$ Toeplitz matrix $A$ as
before, but $f_S(x)$ is given by $[A, I]x$, where $I$ is the
$k$-dimensional identity matrix.  Clearly, the corresponding
family of mappings constitutes a $k$-bit UHF with input
length $l$.  Furthermore, for $m \in \{0,1\}^l$, a uniform
distribution on the inverse-image set $f_S^{-1}(m)$ can be
computed efficiently, too, using the mapping $\phi(S,m,R) =
(R\conc m - AR)$ where $R$ denotes $(l-k)$ uniform random
bits. Note that $\phi$ is indeed the inverse of $f$ since
\[
f_S(\phi(S, m, r)) = [A, I](r\conc m - Ar) = Ar + m - Ar =
m,
\]
for every $s, m, r$. However, this Toeplitz matrix based
construction does not satisfy the conditions for a balanced
UHF and, therefore, cannot be used in
Lemma~\ref{l:leftover_hash_channel}.  To wit, for a nonzero
vector $x\in\{0,1\}^l$ with the first $l-k$ entries $0$,
$f_s(x) = m $ holds for every $s$ if $m = - (x_{l-k+1}, ...,
x_l)$ and for no $s$ otherwise, thereby violating condition
(b) in the definition of a balanced UHF. Nevertheless, it
satisfies condition (a) and, by \cite[Section V]{HayMat10},
will satisfy Lemma~\ref{l:leftover_hash_channel} when we
restrict to a uniform random variable $M$.

It is also of interest to implement a UHF with as little
shared randomness $S$ as possible. See \cite{HayT15} for
constructions based on finite field arithmetic requiring the
best known lengths of the shared seed $S$. In particular,
see \cite[Table I]{HayT15} for a comparison of seed length
required by various implementations available in the
literature. Another concern in hardware implementation of
UHF is the power consumption. To this end, a variant of the
finite field arithmetic UHF proposed in \cite{BlackHKK99}
has been implemented as a low power CMOS circuit in
\cite{YukselKS04}.

For the remainder of this article, we shall assume that the
required UHF or balanced UHF is implemented using the finite
field arithmetic based construction described above and
depicted in Figure~\ref{f:efficient_UHF}.
\begin{figure}[H]
\centering \includegraphics[scale=.28]{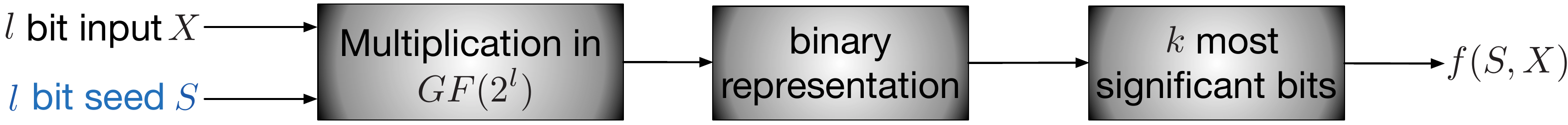}
\caption{An efficiently implementable UHF based on finite
  field arithmetic}
\label{f:efficient_UHF}
\end{figure}

\section{Practical secret key agreement schemes using UHF}
Extracting secret keys from correlated observations $X$ and
$Y$ has two obstacles.  First, although $X$ and $Y$ are
correlated they may not give rise to any shared randomness
for the two parties. In fact, a seminal result of G\'acs and
K\"orner \cite{GacKor73} says that, in general, correlation
cannot be converted into shared bits without
communication. Second, the shared bits that the parties can
generate by communicating may not be uniform or may not be
concealed from the eavesdropper with access to the
communication. All known secret key agreement schemes
circumvent these obstacles separately by first communicating
to agree on a shared randomness, a step referred to as {\it
  information reconciliation}, and then, extracting secret
keys from the generated shared randomness in the {\it
  privacy amplification} step. The choice of shared
randomness to generate and the tools for privacy
amplification vary across the literature. For instance, the
schemes in \cite{Mau93, AhlCsi93, Csi96, MauWol00, RenWol05,
  DodOstReySmi08, HayTyaWat14ii} recover $X$ as shared
randomness at both parties while that in \cite{CsiNar04,
  CsiNar08} recovers both $X$ and $Y$. Also, \cite{Tya13}
explores the role of the choice of shared randomness
established in the information reconciliation step in
reducing the amount of communication for secret key
agreement. For privacy amplification, \cite{Csi96, AhlCsi98,
  CsiNar04, CsiNar08} rely on the {\it balanced coloring
  lemma} which was introduced in \cite{AhlCsi98}.  On the
other hand, \cite{BenBraCreMau95, MauWol00, RenWol05, Ren05,
  HayTyaWat14i, HayTyaWat14ii} among several other works
rely on the leftover hash lemma.

A general construction in the context of biometric security
is given in \cite{DodOstReySmi08}. This construction is an
efficient implementation of the secret key agreement scheme
suggested in \cite{BenBraCreMau95} and \cite{RenWol05}, and
many special cases have appeared in implementation of PUFs;
see, for instance, \cite{GasClaDijDev02}. Also,
constructions based on  {low density parity
  check (LDPC)}  codes are given in
\cite{BlochTMM06} for a weaker notion of security, and the
ones on polar codes are given in \cite{RenesRS13, ChouBA13};
extensions to specific multiterminal models is considered in
\cite{YeNar12}.

We now describe a generic secret key agreement scheme that
can be implemented efficiently. For simplicity, assume that
$X = (X_1, ..., X_n)$ consists of $n$ independent, unbiased,
random bits and $Y = (Y_1,..., Y_n)$ is such that $(X_i,
Y_i)$ are mutually independent and each $Y_i$ is a possibly
flipped version of $X_i$, where flip occurs with probability
$\ep$. Therefore, for large $n$, the Hamming distance
between $X^n$ and $Y^n$ will be roughly $\tau = n\ep$.  In
fact, this scenario is typical, and it is common to process
and quantize the raw physical observations to extract
independent bits ($cf.$~\cite{YeMRSTM10, MaitiGS13}). The
extracted independent bits can be tested for independence
using standardized tests such as NIST SP-800-22-rev1a.  For
the purpose of this article, we shall assume that $n$
independent correlated bits $(X_i, Y_i)_{i=1}^n$ have been
extracted and have been distributed between the two parties.

The first component of our secret key agreement scheme is an
{\it error-correcting code} (ECC) that will facilitate a
compressed transmission of $X^n$ to $Y^n$. This classical
problem in distributed data compression was introduced by
Slepian and Wolf in \cite{SleWol73}, and several efficient
coding schemes accomplishing this are known. For instance,
\cite{LiverisXG02} gives an implementation based on LDPC
codes and \cite{KoradaU10} gives an implementation based on
polar codes. In fact, a simple implementation based on
linear ECC was suggested in \cite{Wyn74} and was used for
secret key agreement in \cite{YeNar12}; we review this
scheme here. Let $\cC$ be a linear ECC of length $n$ that
can be efficiently decoded and can correct up to $\tau$
errors. On observing $\bx$, the first party finds the coset
leader for $\bx$ in the standard array for the code
$\cC$. This can be implemented efficiently by using $\bx$ as
the input to an efficient decoder for $\cC$, noting the
decoded codeword $c_\bx$ and evaluating $e_\bx = \bx \oplus
c_\bx$.  This coset leader $e_{\bx}$ is communicated to the
second party over the public channel. The second party knows
$\by$ and computes $\by\oplus e_{\bx} = \bx \oplus e \oplus
e_{\bx} = c_{\bx}\oplus e$. Recall that $e$ has weight less
than $\tau$ with large probability, and therefore, $c_{\bx}$
can be recovered using the decoding algorithm for $\cC$. The
second party can recover $\bx$ as $c_\bx \oplus e_\bx$,
completing the information reconciliation step.

At this point, both parties agree on an $n$-bit vector
$X^n$, with a small probability of disagreement, $i.e.$, the
second party has an estimate $\widehat {X^n}$ of $X^n$ which
differs from $X^n$ with small probability of
error. Furthermore, a communication of, say, $r$ bits has
been revealed to the eavesdropper via the public channel. In
the privacy amplification step, the parties will use a UHF
to extract a secret key from shared bits $X^n$. In
particular, to use the UHF of Figure~\ref{f:efficient_UHF},
which can be implemented efficiently for input lengths $l$
such that $l+1$ is an odd prime and $2$ is a primitive root
modulo $l+1$, we find the largest such $l\leq n$ and use
just the first $l$ bits $(X_1,...,X_l)$.  In order to select
the range-size $k$, we first need to select a security
criterion and fix the desired security level under that
criteria. For instance, to attain a security of $\delta$
under the total variation distance, it follows
from\footnote{We apply Lemma~\ref{l:leftover_hash} with
  eavesdropper's side-information in the role of $Z_1$ and
  public communication in the role of $Z_2$.}
Lemma~\ref{l:leftover_hash} that $k = \lfloor
H^{\delta/2}_{\min}(\bPP{X^lZ^l}|Z^l) - r - 2\log(2/\delta)
\rfloor$ suffices, where $Z$ denotes the side-information of
the eavesdropper. Note that the security parameter $\delta$
is predecided and $r$ corresponds to maximum number of bits
that may be communicated in the information reconciliation
step.  Thus, to determine $k$, we only need to form an
estimate of the quantity
$H^{\delta/2}_{\min}(\bPP{X^lZ^l}|Z^l)$. For the case of IID
random variables $(X^l, Z^l)$ considered here, the smooth
conditional min-entropy
$H^{\delta/2}_{\min}(\bPP{X^lZ^l}|Z^l)$ can be approximated
by $lH(X|Z)$ (see \cite[Theorem 1]{HolensteinR11} for bounds
on approximation error at a fixed $l$).  The Shannon entropy
$H(X|Z)$ itself can be estimated by using $N =
\Theta(|\cX||\cZ|/\log |\cX||\cZ|)$ independent samples from
$(X,Z)$ \cite{ValVal13}.  If getting samples is expensive,
we can take recourse to an alternative form of the leftover
hash lemma where the threshold is determined by R\'enyi
entropy of order $2$ ($cf.$~\cite{ImpZuc89, BenBraCreMau95,
  Ren05}).  Specifically, using this form for the special
case of constant $Z$, we can find an appropriate value of
$k$ by estimating the R\'enyi entropy of order $2$ of $X$,
which requires only $\Theta(\sqrt{|\cX|})$ samples
\cite{AcharyaOST15}.

Once the value $k$ is determined, a secret key is extracted
by applying a $k$-bit UHF to $X^l$ and $\widehat{X^l}$ at
the first and the second party, respectively.  The overall
scheme discussed here is illustrated in
Figure~\ref{f:SK_scheme}.  The resulting secret key
agreement is capacity achieving if we use an optimal rate
Slepian-Wolf code in the information reconciliation
step. Note that the proposed scheme uses one-side
communication between the two parties, which can be strictly
suboptimal at finite blocklengths if interactive
communication is allowed \cite{HayTyaWat14ii}.
\begin{figure}[H]
\begin{subfigure}[b]{0.5\linewidth}
\centering \includegraphics[scale=.35]{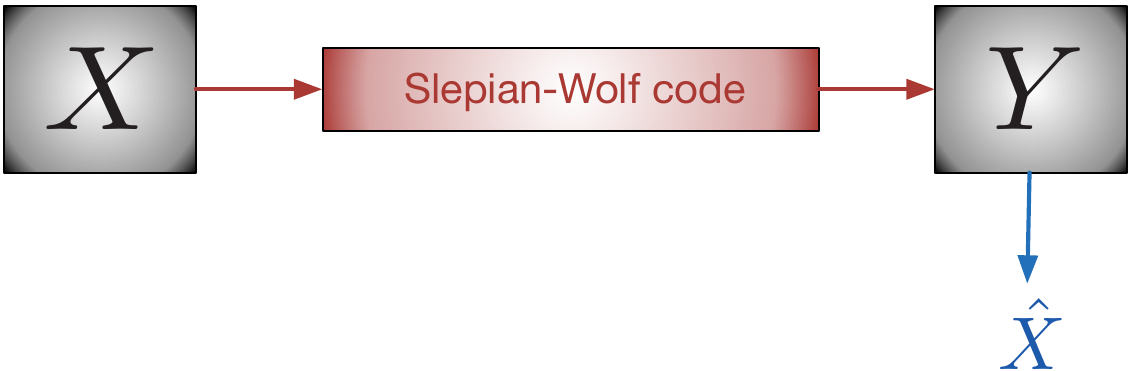}
\caption{Information reconciliation using a Slepian-Wolf
  code}
\end{subfigure}
\begin{subfigure}[b]{.5\linewidth}
\centering \includegraphics[scale=.35]{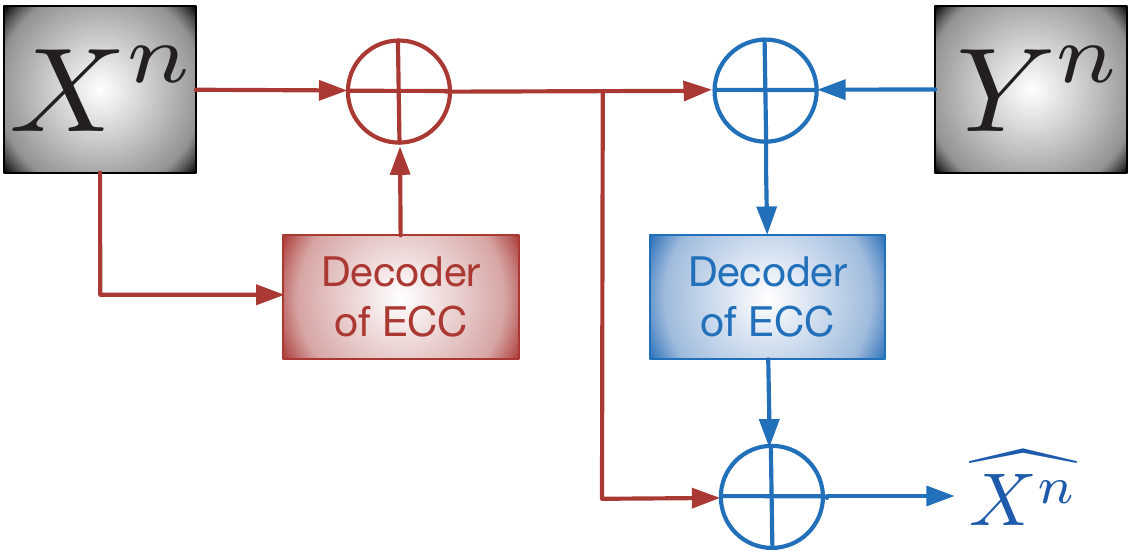}
\caption{Linear Slepian-Wolf code}
\end{subfigure}
\vspace*{0.3cm}

\begin{subfigure}[b]{\linewidth}
\centering \includegraphics[scale=.35]{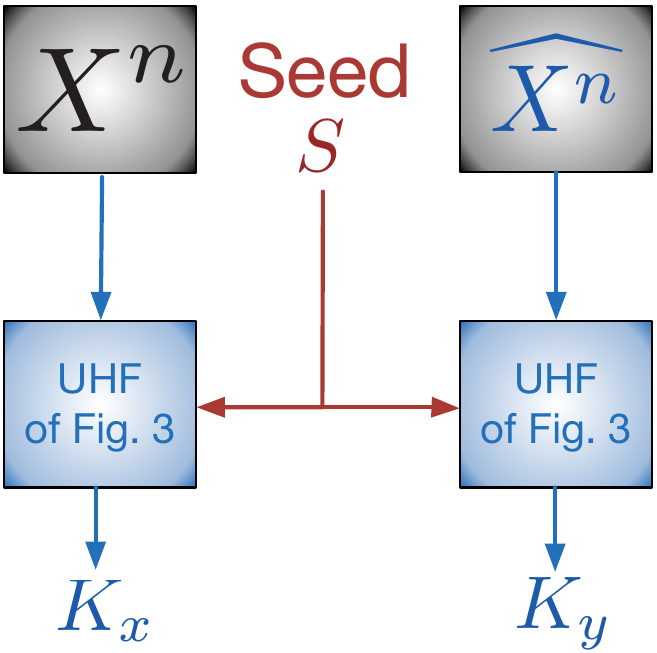}
\caption{Privacy amplification}
\end{subfigure}
\caption{An efficient scheme for secret key agreement}
\label{f:SK_scheme}
\end{figure}

\section{Practical (and modular) wiretap codes using UHF}
In order to present the main ideas underlying the
constructive wiretap coding schemes, we review briefly the
classical capacity-achieving, information-theoretic coding
schemes.

To construct an $(n,k)$ code for a wiretap channel (of rate
$k/n$), Wyner \cite{Wyn75ii} suggested to start with a code
$\cC$ of length $n$ and consider its partition $\cC =
\sqcup_{i=1}^{2^k} \cC_i$ such that
\begin{enumerate}
\item Each element of $\cC$ lies in the typical set
  $\cT_{[\bPP{}]}^n$ (for a definition of typical set, see
  \cite{CsiKor11});
\item $\cC$ is a ``good channel code'' for $T$ with small
  average probability of error;
\item each $\cC_i$ is a channel code for $T$ with average
  probability of error $\ep_i$, and the average of $\ep_i$
  with respect to $i$ is small.
\end{enumerate}
To encode a uniformly distributed message $M$, the channel
input $X^n$ is chosen uniformly over $\cC_M$.  It was shown
in \cite{Wyn75ii} that this scheme constitutes a valid
wiretap code. In fact, by selecting $\cC$ and its partition
randomly, we can attain the capacity of a degraded wiretap
channel.

Interestingly, while \cite{Wyn75ii} identified the general
properties that an ``ad-hoc'' channel code $\cC$ and the
corresponding partition $\sqcup_i \cC_i$ must satisfy to
yield a good wiretap code, the actual code construction in
\cite{Wyn75ii} entailed a joint selection of the code $\cC$
as well as the corresponding partition.  The construction in
\cite{CsiKor78} is of similar form and here, too, the
wiretap code is obtained by a joint selection of the random
channel code and its partition. The construction in
\cite{Csi96} (see, also, \cite{CsiKor11}), which attains the
wiretap capacity for a discrete, memoryless channel under
strong security, also starts with a random code $\cC$ and
partitions it using random binning\footnote{Strong security
  is shown by taking recourse to the {\it balanced coloring
    lemma}; see \cite{CsiKor11} for a detailed account.}.
The same holds for the scheme in \cite{Hay06} which relates
a randomly generated wiretap code to a channel resolvability
code~\cite{HanVer93}.

The information theoretic schemes above raise the following
question: Is it possible to obtain good wiretap codes by
starting with {\it any} good channel code for $T$ and
partitioning it appropriately? Or is the combined design
suggested in the schemes above necessary? In fact, most of
the constructive coding schemes proposed for a wiretap
channel follow the general template outlined above and
design wiretap codes by jointly selecting $\cC$ and its
partition, $i.e.$, the partition is selected, intrinsically,
based on the underlying code itself. For instance, the LDPC
codes based schemes in \cite{ThaDC+07} extend the {\it coset
  coding} scheme of \cite{OzarowW84} and select both the
partition and the overall code $\cC$ based on a specifically
designed parity check matrix (see
\cite[eqn. (21)]{ThaDC+07}); the polar codes based scheme in
\cite{MahV11} obtains the aforementioned partitioning, in
effect, by partitioning the polarized bits -- the polarized
bits that are ``good'' for the legitimate receiver yield the
overall code $\cC$ and the partition is obtained by fixing
the bits that are good only for the legitimate receiver, one
part for each fixed value of these bits (see, for instance,
\cite[eqn. (25)]{MahV11}); other polar coding schemes in
\cite{SosogluV13, RenesRS13, GolcuB14} have a similar form
except that the partitioning of polarized bits is more
involved -- a clear depiction of the partitioning of
polarized bits in these schemes is given in \cite[Figures
  1-4]{GolcuB14}; the same is true for the lattice codes
based scheme suggested for the Gausssian wiretap channel in
\cite{BelfioreOP11, LinLuzBelSte13, LinB14} where the
partition corresponds to appropriately selected cosets in
the transmission lattice. These schemes, while quite
important, will not be covered here in futher detail.  An
interested reader can see \cite{HarrisonABMB13} for a
review.

Thus, deployment of any of these schemes in place of
existing insecure channel codes will require a complete
redesign of the encoder and the decoder, which may not be
feasible. Recently, \cite{HayMat10, BelTes12} proposed a
{\it modular scheme} that starts with a good channel code
for $T$ and converts it into a good wiretap code by adding a
pre-processing layer based on UHFs\footnote{For a different,
  model of a wiretap channel, a coding scheme based on
  invertible extractors was given in
  \cite{CheraghchiDS11}.}. In fact, this modular scheme
appeared first in \cite{Hay11} for the special case when the
underlying channel code for $T$ is linear, and was shown to
achieve the capacity of a wiretap channel when both $T$ and
$W$ are additive, with strong security (based on the mutual
information criterion\footnote{It was extended to the total
  variation distance based security in \cite{Hay13}.}).  The
pre-processing layer of the proposed modular scheme is based
on UHFs and is shown to achieve the capacity of any
symmetric, degraded, discrete wiretap channel in
\cite{HayMat10, BelTes12, TalVar13} as well as that of a
Gaussian wiretap channel in \cite{TyagiV14} (see, also,
\cite[Appendix D]{HayM13}), both under strong security.  In
fact, when the underlying channel code for $T$ has a certain
linear structure, \cite{BelTes12} showed that this scheme
achieves the capacity of a symmetric, degraded, discrete
wiretap channel even under semantic security
(see, also, \cite{HayM13} for capacity results for the
modular scheme under different restrictions on the wiretap
channel and the underlying channel code for $T$).  It
remains unclear if such schemes can attain the capacities of
more general (including nondegraded) wiretap channels, as do
the schemes of \cite{RenesRS13, GolcuB14}, or how does their
overall performance compare with that of the schemes
mentioned above. Nevertheless, their ease of implementation
makes them a leading contender for deployment in practical
applications such as protection against side-channel attack
\cite{BringerCL12}.

In the remainder of this section, we review this modular
scheme. In the first subsection below, we begin by
presenting a {\it seeded wiretap coding scheme} where the
encoder and the decoder, additionally, have access to a
uniformly distributed random seed $S$. In the subsequent
subsection, this assumption of shared random seed will be
relaxed using the {\it seed recycling scheme} of
\cite{BelTesVar12, BelTesVar12ii}. Specifically, a seed $S$
is transmitted to the legitimate receiver over the first few
channel uses, and the same seed is re-used for multiple
instances of the seeded wiretap code. The security of this
combined scheme relying on seed recycling was established in
\cite{BelTesVar12, BelTesVar12ii} using a hybrid argument.

\subsection{Seeded wiretap codes}
To motivate the scheme, suppose that we transmit a message
$U$ by first encoding it using an ECC for $T$ and then the
legitimate receiver decodes $U$ as $\hat U$. Then, we are in
a similar situation as that in the secret key agreement of
Figure~\ref{f:SK_scheme} with $U$ and $\hat U$ corresponding
to the estimates of the reconciled information after the
first part of the scheme. We can extract a secret key $M$
from $U$ that remains concealed from the eavesdropper's
observations using a UHF, as in the privacy amplification
step of the scheme in Figure~\ref{f:SK_scheme}. However, in
the wiretap coding problem we are given a message $M$, and
we must generate $U$ from $M$ rather than the other way
around.  The main observation that leads to a wiretap coding
scheme is that if the extractor $F$ obtained by uniformly
choosing a mapping from a UHF is invertible, then we can
apply its inverse to the message $M$ to obtain $U$ and apply
the extractor itself to the decoded message $\hat U$,
thereby simulating the privacy amplification step in
Figure~\ref{f:SK_scheme} and ensuring security.

The key technical component required for formalizing this
idea is Lemma~\ref{l:leftover_hash_channel}, the channel
version of the leftover hash lemma. Specifically,
Lemma~\ref{l:leftover_hash_channel} shows that a balanced
UHF constitutes a stochastic transformation $\Gamma$ which
converts a given channel $V$ into a channel $V^\prime =
V\circ \Gamma$ with a different input alphabet $\cM$ but the
same output alphabet such that the input $m$ of $V^\prime$
remains secure from an observer of the output of $V^\prime$
and an observer of the random input of $V$ (output of
$\Gamma$) can determine $m$.

Suppose that we are given an ECC $\cC$ for the transmission
channel $T$ with encoder $e_0 : \{0,1\}^l \rightarrow
\cX^n$, where $\cX$ denotes the input of the wiretap
channel.  The code $\cC$ is assumed to facilitate a reliable
transmission of $l$-bit messages over $T^n$ with the maximum
probability of error less than $\mathrm{p}_{\tt e}$.  To
convert this code into an $(n,k)$ wiretap code, we add a
pre-processing layer to it consisting of a {$b$-balanced} $k$-bit UHF
$\{f_s, s\in \cS\}$ with input length $l$.  In order to send
a message $m \in \cM$, the pre-processing layer generates a
seed $S$ uniformly over $\cS$ and outputs a random binary
vector $U$ of length $l$ distributed uniformly over
$f_S^{-1}(m)$.  This vector $U$ is then encoded using $e_0$
and transmitted over $W$. In particular, we use the
efficiently invertible  {$(l-k)$-balanced} UHF
of Figure~\ref{f:efficient_UHF} for  {$\cM =
  \{0, 1\}^k$}. By our assumptions for the
code $\cC$, the random vector $U$ can be decoded at the
output of the transmission channel $T$ with probability of
error less than $\mathrm{p}_{\mathtt{e}}$. Thus, if the
random seed $S$ is available to the legitimate receiver, the
transmitted message $m\in\cM$, too, can be recovered with
probability of error less than $\mathrm{p}_{\mathtt{e}}$ by
applying $f_S$ to the decoded vector $\hat U$.  For the
security of this scheme, it follows from
Lemma~\ref{l:leftover_hash_channel}, applied with the
augmented channel $W_{n,\cC} = W^n \circ e_o$ in the role of
$V$,  {that for a uniformly distributed
  message $M$} 
\[
 \MI M {Z,S} \leq \frac{1}{\ln 2}\cdot2^{-(l -k - \maxIep {W_{n,
       \cC}})} + \ep k.
\]
\begin{figure}[H]
\begin{subfigure}[b]{\linewidth}
\centering \includegraphics[scale=.35]{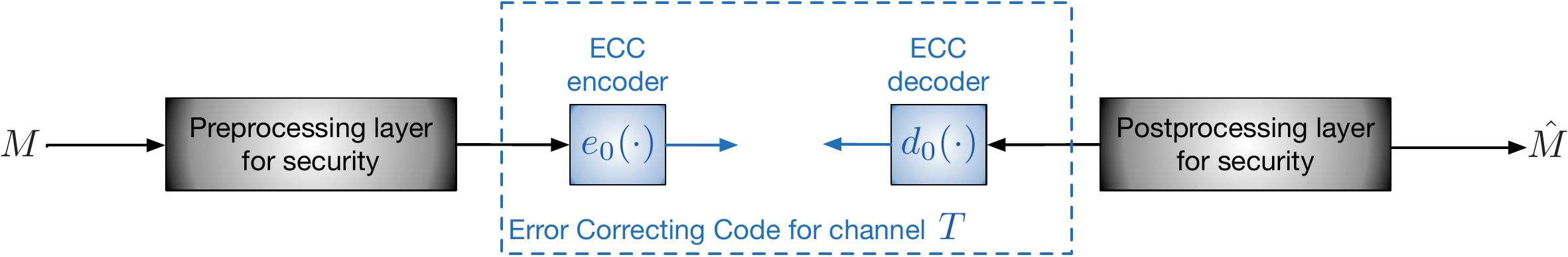}
\caption{Modular scheme for wiretap coding}
\end{subfigure}
\vspace*{0.3cm}

\begin{subfigure}[b]{0.5\linewidth}
\centering \includegraphics[scale=.35]{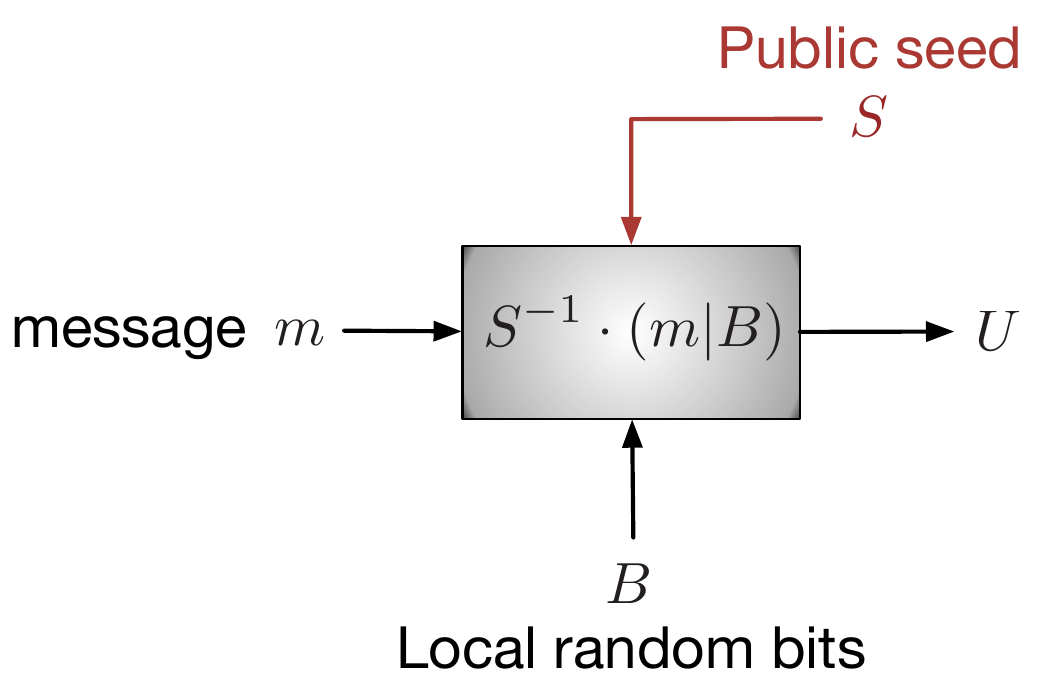}
\caption{The pre-processing layer}
\end{subfigure}
\begin{subfigure}[b]{0.5\linewidth}
\centering \includegraphics[scale=.35]{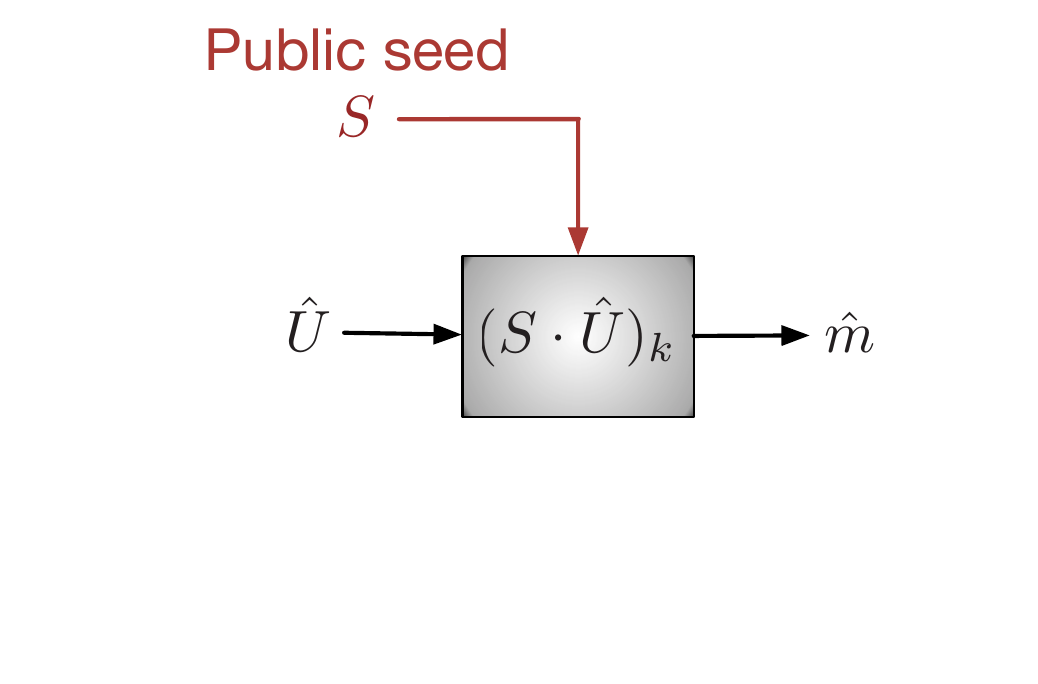}
\caption{The post-processing layer}
\end{subfigure}
\caption{A seeded coding scheme for the wiretap channel}
\label{f:wiretap_scheme}
\end{figure}
Therefore, the overall modular scheme, depicted in
Figure~\ref{f:wiretap_scheme}, constitutes a good $(n,k)$
wiretap code\footnote{To be precise, the proposed code with
  the choice of UHF in Figure~\ref{f:efficient_UHF} can send
  $2^k - 1$ messages because the all $0$ message is excluded
  from the message set $\cM$.} provided that $k$ is selected
appropriately to ensure small leakage $ \MI M
{Z,S}$. Specifically, suppose that the code $\cC$ is of rate
$R$, $i.e.$, $l = nR$.  We show in the Appendix
that there exists
a $c>0$ such that, for $\ep = 2^{-nc}$, $\maxIep {W_{n,
    \cC}}$ is asymptotically less than $nC_W$, where $C_W$
denotes the capacity of the channel $W$, both in the
case of a discrete memoryless channel (DMC) $W$ and in the
case of an additive white Gaussian noise (AWGN) channel $W$
with average input power constraints. Thus, upon choosing
\[
\frac k n = R^\prime < R - C_W,
\]
 {for a uniformly distributed $M$}, $\MI M {Z^n,S}$ vanishes to zero exponentially
rapidly in $n$.

 {In fact, for a symmetric, discrete channel $W$,
if the underlying ECC is linear and the balanced
$(l-k)$-balanced UHF of Figure~\ref{f:efficient_UHF}  is
used, it was shown in \cite{BelTesVar12,
  BelTesVar12ii} that strong security shown above implies
semantic security as well.} 

\subsection{Modular wiretap coding scheme based on seed recycling}
In the previous section, we established the security of  
our scheme assuming that a random seed $S$
was shared publically. We now show that this assumption is not required,
even for semantic security, using a seed recycling trick from 
\cite{BelTesVar12, BelTesVar12ii}.
We first use the 
legitimate channel $T$ to transmit
the seed $S$ reliably to the receiver in $nc$
channel uses, where the constant $c$ is chosen to ensure the 
recovery of $S$ at the receiver with probability of error less than 
$\mathrm{p}_{\tt e}$. Next, to compensate for the rate loss
due to the transmission of $S$,  
we use the same shared seed $S$ to send $t_n$ messages $M_1,
...,M_{t_n}$ using $t_n$ independent
implementations of the seeded wiretap
coding scheme of the previous subsection. The resulting
probability of error in 
transmitting the concatenated message 
$M^{t_n} = (M_1, ..., M_{t_n})$ in overall $N$ channel uses
is bounded above by $(t_n +1)\mathrm{p}_{\tt e}$. 
Also, by combining\footnote{As mentioned before, our
  security requirement is even stronger than the original {semantic
    security} requirement of \cite{BelTes12}, which can be
  shown for the combined scheme simply by using 
\cite[Lemma 4.2]{BelTes12}; \cite[Theorem 4.5,
  4.9]{BelTesVar12}
are required to move between the two notions of security.
}
\cite[Theorem 4.5, 4.9]{BelTesVar12}, \cite[Lemma 4.2]{BelTes12}, 
and the fact that 
$\MI M {Z^n,S}$ vanishes to $0$ exponentially rapidly in $n$,
it follows that 
$\max_{\bPP{M^{t_n}}}\MI {M^{t_n}} {Z^N}
\leq t_n 2^{-nc^\prime}$ for some constant $c^\prime>0$. 
The rate of the overall scheme is
given by
\[
\lim_{n\rightarrow\infty} \frac{t_nk}{(t_n+c)n},
\]
which equals $R^\prime$ as long as $t_n \rightarrow \infty$
as $n\rightarrow \infty$. Therefore, if we choose $t_n$ such
that this
condition is satisfied and both\footnote{The probability of error $\mathrm{p}_{{\tt e}}=\mathrm{p}_{{\tt e}, n}$
for the transmission code $\cC$ depends on $n$ and, in principle, can vanish to 
$0$ exponentially rapidly in $n$.} $(t_n +1)\mathrm{p}_{{\tt e}, n}$
and $t_n 2^{-nc^\prime}$ vanish to $0$, we get a
wiretap coding scheme satisfying semantic security 
of any rate $R^\prime < R-C_W$. Furthermore, if the
underlying  ECC $\cC$ can be implemented efficiently, so can
the combined scheme above.  {Note that the
  argument above is required to reduce the semantic security
  of an unseeded scheme to that of a seeded scheme. For the
  strong security criterion, a much simpler argument based
  on chain rule for mutual information suffices. Specifically,
consider a uniformly distributed message $(M_1, ..., M_{t_n})$.
Note that $M_i$, $1\leq i\leq t_n$ are IID uniform, which
further implies that for each $i$ the random variables
$(M_i, Z^n(i))$ are conditionally independent of
$(M_j,Z^n(j))_{j \neq i}$ given $S$.
Therefore,
\begin{align}
I(M_1, ..., M_{t_n} \wedge Z^N) &\leq
I(M_1, ..., M_{t_n} \wedge Z^N, S)
\nonumber 
\\
&= I(M_1, ..., M_{t_n} \wedge Z^n(1), ..., Z^n(t_n)|S)
\nonumber 
\\
&\leq \sum_{i=1}^{t_n} I(M_i \wedge Z^n(1), ..., Z^n(t_n)|S)
\nonumber 
\\
&= \sum_{i=1}^{t_n} I(M_i
\wedge Z^n(i)|S) 
\nonumber 
\\ & = t_n I(M_1 \wedge Z^n(1),S).
\label{e:no_seed_security}
\end{align} 
The security proof is completed by appropriately choosing $t_n\rightarrow \infty$
as above.

To summarize, the argument above allows us to convert any
efficiently 
implementable transmission code for $T$ of rate $R$ into 
a code of rate $R-C_W$ for the wiretap channel, with a vanishing
probability of error and under strong security. Furthermore, the conversion is done
simply by including an efficiently implementable
pre-processing layer based on a balanced UHF. Note that 
for the special case of a Gaussian wiretap channel or a
symmetric, degraded, discrete wiretap channel, the modular
scheme described above attains the wiretap capacity if the
underlying ECC $\cC$ achieves the capacity $C_T$ of the
transmission channel $T$
since, for these
cases, the wiretap capacity is given by
$C_T - C_W$ \cite{Che77}. In fact, for a discrete symmetric
wiretap channel, if the underlying
capacity achieving ECCC is linear and capacity achieving for
$T$, the modular scheme achieves the wiretap capacity even
under semantic security.

Recall that the balanced UHF of
Figure~\ref{f:wiretap_scheme} can be implemented efficiently
only for selected values of input length $l$. 
Thus, given an  ECC $\cC$ for $T$,
we simply use the largest $l$ less than
the input length of $\cC$ (in bits).
Also, the analysis above was asymptotic and cannot be
applied for a fixed $n$. For a fixed $n$, the output length
$k$ of the balanced UHF, and consequently the message
length, must be chosen to be appropriately smaller than
$l - \maxIep {W_{n, \cC}}$ to get the desired security
level. 
For this
purpose, it is required to estimate the quantity 
$\maxIep {W_{n, \cC}}$ for a given ECC $\cC$ and for
a sufficiently small $\ep$; however, there are no results to
report in this context yet. Furthermore, one might also wish
to compare the finite blocklength performance of this
scheme,
for different choices of ECC $\cC$, with the fundamental
lower bounds similar to those derived for the channel coding
problem in \cite{Hay09, PolPooVer10}. However, 
no such bounds are available.
In fact, even the strong converse for
a degraded wiretap channel was
proved only recently in \cite{HayTyaWat14iii}.

The coding scheme for the basic wiretap model above is a
stepping-stone for deriving schemes for more complicated
wiretap channel models such as the MIMO wiretap channel
considered in \cite{KhinaKK14}. It can be expected that,
based on the simple scheme above, schemes for other more
complicated physical-layer channel models will emerge. One
such extension, with a rather wide scope, appears in
\cite{HayM13}.

\section*{Appendix}
Consider a channel $W: \cX \rightarrow \cZ$ and 
an  encoder (for ECCs) $e_{0}:\{0,1\}^{l}\rightarrow
\{0,1\}^n$. Denote by $W_{e_0}:\{0,1\}^l\rightarrow \cZ^n$
the augmented channel $W^n\circ e_0$ where given an input
$v\in\{0,1\}^n$, with $x_i$ denoting the $i$th coordinate of
$e_0(v)$, the outputs $Z_i$ are independent and
distributed as $W(\cdot| x_i)$, $1\leq i \leq n$. 
In this section, we shall derive an asymptotic bound for
$I_{\max}^{\ep_n} (W_{e_{0}})$  for an exponentially small
$\ep_n$ and for a DMC $W$ with any
encoder $e_0$ as well as for an
AWGN channel $W$ with an encoder $e_{0}$ satisfying the
average power constraint $P$ with probability $1$. 

First, consider a DMC $W: \cX\rightarrow \cZ$. 
\begin{lemma}\label{l:bound_smooth_max_info_DMC}
For any encoder $e_{0}$ and a DMC $W:\cX\rightarrow \cZ$
with finite input and output alphabets $\cX$ and $\cY$,
respectively, there exists a constant $c>0$ such that for
$\ep_n = e^{-nc}$
\[
I_{\max}^{\ep_n} (W_{e_0}) \leq
n\max_{\bPP X}\MI X Z +o(n).
\]
\end{lemma}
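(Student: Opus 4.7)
The plan is to construct the smoothing set $\cT$ via a likelihood-ratio tail condition and then evaluate $\maxI{V_\cT}$ by change of measure to a capacity-achieving product output distribution. Let $P$ be a capacity-achieving input distribution for $W$, put $Q(z):=\sum_x P(x)W(z\mid x)$, and recall that by the Kuhn-Tucker conditions for channel capacity $D(W(\cdot\mid x)\,\|\,Q)\leq \wircap$ for every $x\in\cX$, where $\wircap:=\max_{\bPP X}\MI XZ$. In particular $W(\cdot\mid x)\ll Q$ for every $x$, so the log-likelihood ratio $\log W(z\mid x)/Q(z)$ is bounded by a constant $R$ depending only on $W$ on the support of $W(\cdot\mid x)$. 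For a parameter $\delta>0$ and each $v\in\{0,1\}^l$ with $e_0(v)=(x_1(v),\dots,x_n(v))$, define
\[
\cT_v := \Bigl\{z^n\in\cZ^n \;:\; \sum_{i=1}^n \log\frac{W(z_i\mid x_i(v))}{Q(z_i)}\leq n(\wircap+\delta)\Bigr\}
\]
and set $\cT := \bigsqcup_v \{v\}\times\cT_v$.

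Under $W^n(\cdot\mid e_0(v))$ the $n$ summands in the definition of $\cT_v$ are independent, bounded in absolute value by $R$, and have means $D(W(\cdot\mid x_i(v))\,\|\,Q)\leq \wircap$. Hoeffding's inequality therefore gives $W^n(\cT_v^c\mid e_0(v))\leq \exp(-n\delta^2/(2R^2))$ uniformly in $v$, so $\cT$ satisfies \eqref{e:smoothing_property} with $\ep_n=e^{-nc}$ for $c:=\delta^2/(2R^2)$. To bound $\maxI{V_\cT}$ I change measure to $Q^{\otimes n}$: on $\cT_v$ the density ratio $W^n(z^n\mid e_0(v))/Q^{\otimes n}(z^n)$ is at most $2^{n(\wircap+\delta)}$ by construction, so
\[
2^{\maxI{V_\cT}}=\sum_{z^n}\max_v W^n(z^n\mid e_0(v))\indicator{z^n\in\cT_v}\leq 2^{n(\wircap+\delta)}\sum_{z^n}Q^{\otimes n}(z^n)=2^{n(\wircap+\delta)}.
\]
Taking the infimum defining smooth max-information yields $I_{\max}^{\ep_n}(W_{e_0})\leq \maxI{V_\cT}\leq n(\wircap+\delta)$.

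The claim follows on letting $\delta$ be small. The main obstacle is reconciling the requirement that $\ep_n$ be exponentially small (which forces $\delta$ to a positive constant, leaving an $n\delta$ linear slack) with the target $n\wircap+o(n)$ (which asks $\delta\to 0$). The natural resolution is to let $\delta=\delta_n$ decay slowly (e.g.\ $\delta_n=1/\log n$), so that $\ep_n$ remains super-polynomially small while the slack becomes genuine $o(n)$; alternatively, since the statement only asserts existence of some $c>0$, one may simply pick $c$ small enough that the residual $n\delta$ slack lies within the tolerance of the $o(n)$ bound. Otherwise the argument is clean, as the Kuhn-Tucker identities reduce the capacity bookkeeping to a single uniform-in-$x$ dominance $D(W(\cdot\mid x)\,\|\,Q)\leq \wircap$, which a one-shot Chernoff plus change-of-measure already exploits without needing any method-of-types analysis.
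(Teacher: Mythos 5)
Your argument is correct in substance and takes a genuinely different route from the paper's. The paper works with the method of types: it restricts to the $W$-conditionally typical set, bounds $\log W^n(z^n\mid x^n)$ by roughly $-nH(W\mid P)$ there, counts the output typical set $\cT_{[PW]}$, and then pays a polynomial factor to union over the composition classes of the code. You instead smooth on the information-density event $\sum_{i}\log\bigl(W(z_i\mid x_i(v))/Q(z_i)\bigr)\le n(\wircap+\delta)$ relative to the capacity-achieving output distribution $Q$ and conclude by a change of measure to $Q^{\otimes n}$; the Kuhn--Tucker inequality $D(W(\cdot\mid x)\,\|\,Q)\le \wircap$ makes the resulting bound uniform over codewords, so no decomposition into constant-composition subcodes is needed. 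This is cleaner, works verbatim for any code, and produces the exponent $c=\delta^2/(2R^2)$ explicitly; the paper's route has the mild advantage of yielding the sharper per-composition bound $nI(P;W)$ when the code happens to be constant-composition.

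The tension you flag at the end is real, but be aware of two things. First, neither of your proposed fixes literally delivers the statement as written: taking $\delta_n=1/\log n$ gives an exception probability $\exp(-n\delta_n^2/(2R^2))$ that is no longer of the form $e^{-nc}$ for a constant $c>0$ (and a set with larger exception probability is not a valid competitor in the infimum defining $I_{\max}^{e^{-nc}}$), while for any fixed $\delta>0$ the slack $n\delta$ is $\Theta(n)$ rather than $o(n)$ no matter how small $c$ is chosen. Second, the paper's own proof suffers from exactly the same tension: to make the atypicality probability $2^{-nc}$ for a constant $c$ one must fix the typicality parameter at a constant $\eta$, and then the displayed $o(n)$ error terms are really $\Theta(n\eta)$. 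What both arguments actually establish is that for every $\delta>0$ there exists $c>0$ such that $I_{\max}^{e^{-nc}}(W_{e_0})\le n(\wircap+\delta)+o(n)$, and this weaker form is all that the downstream application uses, since the message rate $R'$ there is chosen a positive constant below $R-C_W$. So your proof stands on equal footing with the paper's; it would be worth stating the lemma in the ``for every $\delta$ there is a $c$'' form to make this explicit.
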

\begin{proof}
We first prove the result for a constant composition code
where each codeword $e_0(v)$ is of a fixed type $P$,
$i.e.$, for $e_0$ such that each element $x\in \cX$
appears $nP(x)$ times in every codeword $x^n = e_0(v)$.

Denote by $\cT_{[W]}$ the set of sequences $(x^n, z^n)$ such that
$z^n$ is $W$-conditionally typical given $x^n$,
by $\cT_{[P,W]}$
the set of sequences $(x^n, z^n)\in \cT_{[W]}$ such that $x^n$ has
type $P$, 
and by $\cT_{[PW]}$ the projection of $\cT_{[P,W]}$ on
$\cZ^n$ (these notations are a slight deviation from those
used in \cite{CsiKor11}).  Then, using basic results from
the method of types (see \cite[Chapter 2]{CsiKor11}) for
each $(x^n, z^n)\in \cT_{[P, W]}$, it holds that
\[
\log W^n(z^n \mid x^n) \leq -nH(W\mid P) + o(n).
\]
Furthermore,
\[
\log |\cT_{[PW]}| \leq n H(PW)+o(n),
\]
where $PW$ denotes the output distribution for channel
$W$ when the input distribution is $P$.  Since there exists
a $c>0$ such that for all $v$
\[
\sum_{z^n:(e_0(v), z^n)\notin \cT_{[W]}}W^n(z^n \mid e_0(v))
\leq 2^{-nc},
\]
for the subnormalized channel $W_{e_0,\cT_{[W]}}$ defined by \eqref{e:smoothing_def},
we have 
\begin{align}
\maxIep {W_{e_0}}&\leq \maxI {W_{e_0,\cT_{[W]}}} \nonumber
\\ &= \log \sum_z \max_{v} W^n(z^n \mid
e_0(v))\indicator{(e_0(v),z^n)\in \cT_{[W]}} \nonumber
\\ &\leq -n H(W\mid P) + \log \sum_z \max_{v}
\indicator{(e_0(v),z^n)\in \cT_{[W]}} + o(n) \nonumber \\ &\leq
-n H(W\mid P) + \log |\cT_{[PW]}| + o(n) \nonumber \\ &\leq
nI(P; W) + o(n),
\label{e:constant_composition_Imax_bound}
\end{align}
where the last-but-one inequality uses the fact that 
$(e_0(v),z^n)\in \cT_{[W]}$ implies $z^n \in \cT_{[PW]}$
and so
\[
\max_{v}
\indicator{(e_0(v),z^n)\in \cT_{[W]}} \leq \indicator {z^n \in \cT_{[PW]}}
= |\cT_{[PW]}|.
\]
This completes the proof for a constant composition code.

Proceeding to the case of a general code, denote by $\cC_P$
the set of codewords $e_0(v)$ of type $P$.  As before, we
have
\begin{align*}
\maxIep {W_{e_0}}&\leq \maxI {W_{e_0,\cT_{[W]}}} \\ &= \log
\sum_z \max_{v} W^n(z^n \mid
e_0(v))\indicator{(e_0(v),z^n)\in \cT_{[W]}} \\ &\leq \log
\sum_P\sum_z \max_{x^n \in \cC_P} W^n(z^n \mid
x^n)\indicator{(x^n,z^n)\in \cT_{[W]}} \\ &\leq n\max_{\bPP
  X}I(\bPP X; W) + o(n),
\end{align*}
where the final inequality is obtained in the manner of
\eqref{e:constant_composition_Imax_bound} upon using the
fact that the number of types is polynomial in $n$ ($cf.$
\cite[Lemma 2.1]{CsiKor11}).
\end{proof}
Next, consider an AWGN channel $W: \mR\rightarrow \mR$,
$i.e.$, a channel such that for an  input $x \in \mR$ the
output $Z$ is 
distributed as $W(\cdot| x) = \cN(0, \sigma_W^2)$. 
Let  $e_0:\{0,1\}^l \rightarrow \{0,1\}^n$ be 
an encoder satisfying the average power constraint
\begin{align}
\frac 1n\|e_0(v)\|_2^2 \leq P, \quad \forall \, v\in
\{0,1\}^l.
\label{e:ECC_constraint}
\end{align}
The next result shows that the $\ep_n$-smooth
max-information for $W_{e_0}$ is bounded above by, roughly,
$n$ times the capacity of the AWGN $W$ with average input
power constraint $P$, for an exponentially
small $\ep_n$.
\begin{lemma}\label{l:bound_Imax_AWGN}
Let $W: \mR \rightarrow \mR$ be an AWGN channel with noise
variance $\sigma_W^2$, and let
$e_0~:~\{0,1\}^l~\rightarrow~\mR^n$ be an encoder
satisfying~\eqref{e:ECC_constraint}.
Then, denoting $\ep_n = e^{-n\delta^2/8}$, for the
combined channel $W_{e_0}$ it holds 
that
\[
I_{\max}^{\ep_n}(W_{e_0}) \leq \frac n 2\log \left(1+
 \frac P{\sigma_W^2}\right) + {n\delta \log e} +
o(n),
\]
for every $0<\delta$ sufficiently small.
\end{lemma}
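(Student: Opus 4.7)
My plan is to mirror the proof of Lemma~\ref{l:bound_smooth_max_info_DMC}: replace the joint-typical set by a Gaussian ``noise-shell plus bounded-output-norm'' set, and replace the uniform-over-type-class reference by a Gaussian reference density. Concretely, I would pick the smoothing set $\cT = \cT_1 \cap \cT_2$, where
\[
\cT_1 = \{(v, z^n)\in \{0,1\}^l\times \mR^n : \|z^n - e_0(v)\|_2^2 \geq n\sigma_W^2(1-\delta)\}
\]
forces the noise to be not atypically small (so $W^n(z^n|e_0(v))$ admits a pointwise upper bound), and
\[
\cT_2 = \{(v, z^n)\in \{0,1\}^l\times \mR^n : \|z^n\|_2^2 \leq n(P + \sigma_W^2(1+\delta))\}
\]
forces the output norm to respect the total-power budget, so that $W^n(\cdot|e_0(v))$ can be compared cleanly to a Gaussian reference of variance $s^2 := P + \sigma_W^2(1+\delta)$. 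Note that only $\cT_1$ alone would leave the integral $\int \max_v W^n(z^n|e_0(v))\indicator{(v,z^n)\in \cT_1}\, dz^n$ divergent, so the $\cT_2$ component is essential.

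For the probability bound, conditioned on any $v$ the random variable $\|Z^n - e_0(v)\|_2^2/\sigma_W^2$ is $\chi_n^2$, whence a standard Chernoff bound gives $\bPr{\cT_1^c \mid v}\leq e^{-n\delta^2/4+o(n)}$. For $\cT_2$, conditionally on $v$ the random variable $\|Z^n\|_2^2/\sigma_W^2$ is noncentral chi-squared with $n$ degrees of freedom and noncentrality parameter $\|e_0(v)\|_2^2/\sigma_W^2 \leq nP/\sigma_W^2$, and standard Bernstein-type tail bounds for noncentral chi-squared (combined with the power constraint~\eqref{e:ECC_constraint}) yield $\bPr{\cT_2^c\mid v}\leq e^{-c(P,\sigma_W)\, n\delta^2}$ for some explicit constant $c(P,\sigma_W)>0$. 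A union bound then gives $\bPr{\cT^c\mid v}\leq e^{-n\delta^2/8}$ for every $v$ and every sufficiently large $n$, after possibly relaxing the thresholds in $\cT_1, \cT_2$ by multiplicative $1+o(1)$ factors absorbed into the final $o(n)$ term.

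To bound $I_{\max}(V_{e_0,\cT})$, introduce the Gaussian reference density $q(z^n) = (2\pi s^2)^{-n/2}\exp(-\|z^n\|_2^2/(2s^2))$. Using $\int q = 1$, one has
\[
\int_{\mR^n} \max_v W^n(z^n|e_0(v))\indicator{(v,z^n)\in \cT}\, dz^n \leq \sup_{(v,z^n)\in \cT,\, \|e_0(v)\|_2^2\leq nP} \frac{W^n(z^n|e_0(v))}{q(z^n)}.
\]
A direct computation gives
\[
\log\frac{W^n(z^n|e_0(v))}{q(z^n)} = \frac{n}{2}\log\frac{s^2}{\sigma_W^2} - \frac{\|z^n-e_0(v)\|_2^2\log e}{2\sigma_W^2} + \frac{\|z^n\|_2^2 \log e}{2s^2}.
\]
On $\cT$, the constraint $\cT_1$ upper-bounds the second term by $-(1-\delta)(n\log e)/2$, while $\cT_2$ together with the choice of $s^2$ upper-bounds the third term by $(n\log e)/2$. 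Adding these gives the supremum at most $\frac{n}{2}\log(s^2/\sigma_W^2) + \frac{n\delta\log e}{2} = \frac{n}{2}\log(1+\delta+P/\sigma_W^2) + \frac{n\delta \log e}{2}$, matching the claimed bound up to an $o(n)$ term.

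The main obstacle is the concentration estimate for $\cT_2$: the exponent $c(P,\sigma_W)$ from the noncentral chi-squared tail degrades as $P/\sigma_W^2$ grows, so the probability bound $e^{-n\delta^2/8}$ is not obtained with the literal thresholds chosen above. The fix is to take the $\cT_2$ threshold to be $n(P+\sigma_W^2(1+\delta))(1+\eta_n)$ for a suitable sequence $\eta_n\to 0$ slowly (e.g., $\eta_n = n^{-1/3}$), which both ensures the correct probability bound and enlarges the supremum only by $o(n)$, leaving the leading-order bound intact. Once this calibration is in place, all remaining computations are routine, paralleling the constant-composition step of Lemma~\ref{l:bound_smooth_max_info_DMC} with noise shells in place of type classes.
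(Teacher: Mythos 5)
Your route is genuinely different from the paper's, and the difference is instructive. The paper smooths only on the noise: its set $\cT$ requires $z^n-e_0(v)$ to lie in a thin shell, so the probability requirement is met trivially and uniformly in $v$ with $\ep_n=2e^{-n\delta^2/8}$, and the integral $\int\max_v W^n(z^n\mid e_0(v))\indicator{\cdot}\,dz^n$ is then bounded by a pointwise density bound times the volume of a ball containing $\bigcup_v\big(e_0(v)+\mathrm{shell}\big)$. You instead smooth on both the noise norm ($\cT_1$) and the output norm ($\cT_2$), and replace the volume count by comparison with a Gaussian reference density $q$ of variance $s^2=P+\sigma_W^2(1+\delta)$; the inequality $\int\max_v W^n\,\indicator{\cT}\le\sup_{\cT}W^n/q$ and the term-by-term bound of $\log(W^n/q)$ on $\cT$ are correct and reproduce the claimed constant exactly. (A small quibble: without $\cT_2$ the integral is not divergent---it is at most $2^l$---it merely yields the useless bound $l$; your point that output localization is indispensable in your argument stands.) What each approach buys: the paper's deterministic containment costs nothing in probability, but it quietly drops the cross term---$\|e_0(v)+w\|$ can be as large as $\sqrt{nP}+\sigma_W\sqrt{n(1+\delta)}$, whose square exceeds $n(P+\sigma_W^2(1+\delta))$---whereas you account for that cross term honestly, but probabilistically, and therefore pay in the exponent of $\ep_n$.

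That is precisely where the one genuine gap sits, and your proposed fix does not close it. Conditioned on $v$, $\|Z^n\|_2^2-\bEE{\|Z^n\|_2^2}=2\langle e_0(v),N\rangle+(\|N\|_2^2-n\sigma_W^2)$, and the cross term is Gaussian with standard deviation up to $2\sigma_W\sqrt{nP}$. To drive $\bPr{\cT_2^c\mid v}$ down to $e^{-n\delta^2/8}$ you need slack of order $n\delta\sigma_W\sqrt{P}$ above the conditional mean, while the threshold $n(P+\sigma_W^2(1+\delta))$ provides only $n\sigma_W^2\delta$; inflating it by $1+\eta_n$ with $\eta_n\to0$ adds only $o(n)$, so the achievable exponent stays of order $n\delta^2\sigma_W^2/P$, which falls below $n\delta^2/8$ once $P$ exceeds a constant multiple of $\sigma_W^2$. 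Two honest repairs: (i) keep your $\cT_2$ and state the conclusion with $\ep_n=e^{-nc}$ for some $c=c(\delta,P/\sigma_W^2)>0$, which is all the paper's application actually uses (it only needs some exponentially small $\ep_n$); or (ii) keep $\ep_n=e^{-n\delta^2/8}$ but enlarge the $\cT_2$ threshold by $\Theta(n)$, e.g.\ to $n\big(\sqrt{P}+\sigma_W\sqrt{1+\delta}\big)^2$ plus the needed Gaussian slack, at the price of a strictly larger leading constant. Either way, your construction proves the substance of the lemma but not the stated pairing of $\ep_n=e^{-n\delta^2/8}$ with the stated right-hand side; note that the paper's own proof faces the same cross term at its containment step, so this is a defect of the target statement's constants as much as of your argument.
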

\begin{proof}
Denote by $g(z)$ the standard normal density on $\mR^n$, by $\cZ_0$
the set $\{z^n: \|z^n\|_2^2\leq n(\sigma_W^2+P)(1+\delta)\}$, and by
$\cZ_{x^n}$ the set $\{z^n : \|z^n - x^n\|_2^2\geq
n\sigma_W^2(1-\delta)\}$. Further, denote 
\[
\cT = \{(x^n,
z^n): \|x^n\|_2^2> nP, z^n \in \mR^n\} \bigcup \{(x^n, z^n): \|x^n\|_2^2\leq
nP, z^n \in \cZ_{x^n}\cap \cZ_0\}. 
\]
Then, by the tail bounds for   
non-central $\chi^2$ RVs in \cite[Lemma 8.1]{Bir01} and for $\chi^2$
RVs (cf. \cite[Exercise 2.1.30]{AndGuiZei10}),  
we have  
\[
W^n(\{z^n: (x^n,z^n) \in \cT\}| x^n)\geq 1- \ep_n,
\] 
when $\delta$ is sufficiently small.
The following inequalities ensue:
\begin{align*}
I_{\max}^{\ep_n}(W_{e_0}) &\leq \maxI {W_{e_{0}, \cT}}
\\ &=\log \int_{\mR^n} \max_v g\left(\frac{z -
  e_0(v)}{\sigma_W}\right)\indicator{(e_0(v),z^n)\in
  \cT}dz \\ &\leq \log
\frac{e^{-\frac{n(1-\delta)}{2}}}{(2\pi\sigma_W^2)^\frac{n}{2}}
\int_{\mR^n} \max_v \indicator{(e_0(v),z^n)\in \cT}dz
\\ &\leq \log
\frac{e^{-\frac{n(1-\delta)}{2}}}{(2\pi\sigma_W^2)^\frac{n}{2}}\mathrm{vol}\left(\cZ_0\right),
\end{align*}
where the previous two inequalities hold by the definition of $\cT$ since 
$e_0(v)$ 
satisfies~\eqref{e:ECC_constraint} for all $v$. Denote by
$\cB_n(\rho)$ the sphere of radius $\rho$ in $\mR^n$ and by
$\nu_n(\rho)$ its volume, which can be approximated as
(cf. \cite{Wang05})
\begin{align}
\nu_n(\rho) = \frac{1}{\sqrt{n\pi}}\left(\frac{2\pi
  e}{n}\right)^{\frac{n}{2}}\rho^n\left(1 + O(n^{-1})\right).
\nonumber
\end{align}
Therefore, applying the volume formula above to 
$\rho_n = \sqrt{n(\sigma_W^2+P)(1+\delta)}$ and continuing 
with the foregoing
bounds for $I_{\max}^{\ep_n}(W_{e_0})$, we get
\begin{align*}
I_{\max}^{\ep_n}(W_{e_0}) &\leq \log
\frac{e^{-\frac{n(1-\delta)}{2}}}{(2\pi\sigma_W^2)^\frac{n}{2}}\nu_n(\rho_n)
\\
&=\log
\left[\frac{e^{\frac{n\delta}{2}}}{\sqrt{n\pi}}
\left(\frac{\rho_n^2}{n\sigma_W^2}\right)^\frac{n}{2}\left(1 +
O(n^{-1})\right) \right]
\\
&=\log
\left[\frac{e^{\frac{n\delta}{2}}}
{\sqrt{n\pi}}
\left[\left(1  + \frac P{\sigma_W^2}\right)(1+\delta)\right]^{\frac n 2}\left(1 +
O(n^{-1})\right) \right]
\\
&\leq\frac n 2\log \left(1
  + \frac P{\sigma_W^2}\right)
+ {n\delta \log e} + o(n),
\end{align*}
where we have used $\log (1+x) \leq x\log e$ in the last inequality.
\end{proof}

\bibliography{IEEEabrv,references}

\begin{thebibliography}{100}
\providecommand{\url}[1]{#1}
\csname url@samestyle\endcsname
\providecommand{\newblock}{\relax}
\providecommand{\bibinfo}[2]{#2}
\providecommand{\BIBentrySTDinterwordspacing}{\spaceskip=0pt\relax}
\providecommand{\BIBentryALTinterwordstretchfactor}{4}
\providecommand{\BIBentryALTinterwordspacing}{\spaceskip=\fontdimen2\font plus
\BIBentryALTinterwordstretchfactor\fontdimen3\font minus
  \fontdimen4\font\relax}
\providecommand{\BIBforeignlanguage}[2]{{%
\expandafter\ifx\csname l@#1\endcsname\relax
\typeout{** WARNING: IEEEtranS.bst: No hyphenation pattern has been}%
\typeout{** loaded for the language `#1'. Using the pattern for}%
\typeout{** the default language instead.}%
\else
\language=\csname l@#1\endcsname
\fi
#2}}
\providecommand{\BIBdecl}{\relax}
\BIBdecl

\bibitem{AcharyaOST15}
J.~Acharya, A.~Orlitsky, A.~T. Suresh, and H.~Tyagi, ``The complexity of
  estimating {R}{\'e}nyi entropy,'' in \emph{Proc. ACM-SIAM Symposium on
  Discrete Algorithms (SODA)}, 2015, pp. 1855--1869.

\bibitem{AhlCsi93}
R.~Ahlswede and I.~Csisz{\'a}r, ``Common randomness in information theory and
  cryptography--part i: Secret sharing,'' \emph{{IEEE} Trans. Inf. Theory},
  vol.~39, no.~4, pp. 1121--1132, July 1993.

\bibitem{AhlCsi98}
------, ``Common randomness in information theory and cryptography--part ii:
  {CR} capacity,'' \emph{{IEEE} Trans. Inf. Theory}, vol.~44, no.~1, pp.
  225--240, January 1998.

\bibitem{AhlCsi13}
------, ``On oblivious transfer capacity,'' \emph{Information Theory,
  Combinatorics, and Search Theory}, pp. 145--166, 2013.

\bibitem{AndGuiZei10}
G.~W. Anderson, A.~Guionnet, and O.~Zeitouni, \emph{An introduction to random
  matrices}.\hskip 1em plus 0.5em minus 0.4em\relax Cambridge University Press,
  2010.

\bibitem{AonoHOKS05}
T.~Aono, K.~Higuchi, T.~Ohira, B.~Komiyama, and H.~Sasaoka, ``Wireless secret
  key generation exploiting reactance-domain scalar response of multipath
  fading channels,'' \emph{IEEE Trans. on Antennas and Propagation}, 2005.

\bibitem{BelfioreOP11}
J.-C. Belfiore, F.~Oggier, and P.~Sol\'e, ``Lattice codes for the {G}aussian
  wiretap channel,'' in \emph{Coding and Cryptology}, ser. Lecture Notes in
  Computer Science.\hskip 1em plus 0.5em minus 0.4em\relax Springer Berlin
  Heidelberg, 2011, vol. 6639, pp. 47--61.

\bibitem{BelTes12}
M.~Bellare and S.~Tessaro, ``Polynomial-time, semantically-secure encryption
  achieving the secrecy capacity,'' Cryptology ePrint Archive, Report 2012/022,
  2012, \url{http://eprint.iacr.org/}.

\bibitem{BelTesVar12}
M.~Bellare, S.~Tessaro, and A.~Vardy, ``A cryptographic treatment of the
  wiretap channel,'' Cryptology ePrint Archive, Report 2012/015, 2012,
  \url{http://eprint.iacr.org/}.

\bibitem{BelTesVar12ii}
------, ``Semantic security for the wiretap channel,'' \emph{CRYPTO, LNCS},
  vol. 7417, pp. 294--311, 2012.

\bibitem{BenBraCreMau95}
C.~H. Bennett, G.~Brassard, C.~Cr{\'e}peau, and U.~M. Maurer, ``Generalized
  privacy amplification,'' \emph{{IEEE} Trans. Inf. Theory}, vol.~41, no.~6,
  pp. 1915--1923, November 1995.

\bibitem{BenBraRob88}
C.~H. Bennett, G.~Brassard, and J.-M. Robert, ``Privacy amplification by public
  discussion,'' \emph{SIAM J. Comput.}, vol.~17, no.~2, pp. 210--229, 1988.

\bibitem{Bir01}
L.~Birg\'e, ``An alternative point of view on {L}epski's method,''
  \emph{Lecture Notes-Monograph Series}, vol.~36, pp. 113--133, 2001.

\bibitem{BlackHKK99}
J.~Black, S.~Halevi, H.~Krawczyk, T.~Krovetz, and P.~Rogaway, ``{UMAC}: Fast
  and secure message authentication,'' in \emph{Advances in Cryptology —
  CRYPTO 1999}, 1999, pp. 216--233.

\bibitem{BlochTMM06}
M.~Bloch, A.~Thangaraj, S.~McLaughlin, and J.-M. Merolla, ``{LDPC}-based secret
  key agreement over the {G}aussian wiretap channel,'' in \emph{Information
  Theory, 2006 IEEE International Symposium on}, July 2006, pp. 1179--1183.

\bibitem{Blu83}
M.~Blum, ``Coin flipping by telephone a protocol for solving impossible
  problems,'' \emph{SIGACT News}, vol.~15, no.~1, pp. 23--27, Jan. 1983.

\bibitem{BringerCL12}
J.~Bringer, H.~Chabanne, and T.~Le, ``Protecting {AES} against side-channel
  analysis using wire-tap codes,'' \emph{Journal of Cryptographic Engineering},
  vol.~2, no.~2, pp. 129--141, 2012.

\bibitem{CarWeg79}
J.~L. Carter and M.~N. Wegman, ``Universal classes of hash functions,''
  \emph{Journal of Computer and System Sciences}, vol.~18, no.~2, pp. 143 --
  154, 1979.

\bibitem{CheraghchiDS11}
M.~Cheraghchi, F.~Didier, and A.~Shokrollahi, ``Invertible extractors and
  wiretap protocols,'' \emph{{IEEE} Trans. Inf. Theory}, vol.~58, no.~2, pp.
  1254--1274, Feb 2012.

\bibitem{ChouBA13}
R.~A. Chou, M.~R. Bloch, and E.~Abbe, ``Polar coding for secret-key
  generation,'' \emph{CoRR}, vol. abs/1305.4746, 2013.

\bibitem{CreKil88}
C.~Cr{\'e}peau and J.~Kilian, ``Weakening security assumptions and oblivious
  transfer,'' in \emph{Advances in Cryptology - Crypto '88}, 1990, pp. 2--7.

\bibitem{CroftPK10}
J.~Croft, N.~Patwari, and S.~K. Kasera, ``Robust uncorrelated bit extraction
  methodologies for wireless sensors,'' in \emph{Proceedings of the 9th
  ACM/IEEE International Conference on Information Processing in Sensor
  Networks}, 2010, pp. 70--81.

\bibitem{Csi96}
I.~Csisz{\'a}r, ``Almost independence and secrecy capacity,'' \emph{Prob.
  Pered. Inform.}, vol.~32, no.~1, pp. 48--57, 1996.

\bibitem{CsiKor78}
I.~Csisz\'ar and J.~K\"orner, ``Broadcast channels with confidential
  messages,'' \emph{{IEEE} Trans. Inf. Theory}, vol.~24, no.~3, pp. 339--348,
  May 1978.

\bibitem{CsiKor11}
I.~Csisz{\'a}r and J.~K{\"o}rner, \emph{Information theory: {C}oding theorems
  for discrete memoryless channels. 2nd edition}.\hskip 1em plus 0.5em minus
  0.4em\relax Cambridge University Press, 2011.

\bibitem{CsiNar04}
I.~Csisz{\'a}r and P.~Narayan, ``Secrecy capacities for multiple terminals,''
  \emph{{IEEE} Trans. Inf. Theory}, vol.~50, no.~12, pp. 3047--3061, December
  2004.

\bibitem{CsiNar08}
------, ``Secrecy capacities for multiterminal channel models,'' \emph{{IEEE}
  Trans. Inf. Theory}, vol.~54, no.~6, pp. 2437--2452, June 2008.

\bibitem{DiffieH76}
W.~Diffie and M.~Hellman, ``New directions in cryptography,'' \emph{{IEEE}
  Trans. Inf. Theory}, vol.~22, no.~6, pp. 644--654, Sep. 2006.

\bibitem{DodOstReySmi08}
Y.~Dodis, R.~Ostrovsky, L.~Reyzin, and A.~Smith, ``Fuzzy extractors: How to
  generate strong keys from biometrics and other noisy data,'' \emph{SIAM
  Journal on Computing}, vol.~38, no.~1, pp. 97--139, 2008.

\bibitem{GacKor73}
P.~G\'acs and J.~K\"orner, ``Common information is far less than mutual
  information,'' \emph{Problems of Control and Information Theory}, vol.~2,
  no.~2, pp. 149--162, 1973.

\bibitem{GasClaDijDev02}
B.~Gassend, D.~Clarke, M.~van Dijk, and S.~Devadas, ``Silicon physical random
  functions,'' \emph{Proc. ACM Conference on Computer Communications and
  Security}, pp. 148 --160, November 2002.

\bibitem{GolMic84}
S.~Goldwasser and S.~Micali, ``Probabilistic encryption,'' \emph{Journal of
  Computer and System Sciences}, vol.~28, no.~2, pp. 270 -- 299, 1984.

\bibitem{GolcuB14}
T.~C. Gulcu and A.~Barg, ``Achieving secrecy capacity of the wiretap channel
  and broadcast channel with a confidential component,'' \emph{CoRR}, vol.
  abs/1410.3422, 2014.

\bibitem{Han03}
T.~S. Han, \emph{Information-Spectrum Methods in Information Theory [English
  Translation]}.\hskip 1em plus 0.5em minus 0.4em\relax Series: Stochastic
  Modelling and Applied Probability, Vol. 50, Springer, 2003.

\bibitem{HanVer93}
T.~S. Han and S.~Verd\'{u}, ``Approximation theory of output statistics,''
  \emph{{IEEE} Trans. Inf. Theory}, vol.~39, no.~3, pp. 752--772, May 1993.

\bibitem{HarrisonABMB13}
W.~Harrison, J.~Almeida, M.~Bloch, S.~McLaughlin, and J.~Barros, ``Coding for
  secrecy: An overview of error-control coding techniques for physical-layer
  security,'' \emph{Signal Processing Magazine, IEEE}, vol.~30, no.~5, pp.
  41--50, Sept 2013.

\bibitem{HastadILL99}
J.~H{\aa}stad, R.~Impagliazzo, L.~A. Levin, and M.~Luby, ``A pseudorandom
  generator from any one-way function,'' \emph{SIAM Journal on Computing},
  vol.~28, pp. 12--24, 1999.

\bibitem{Hay06}
M.~Hayashi, ``General nonasymptotic and asymptotic formulas in channel
  resolvability and identification capacity and their application to the
  wiretap channel,'' \emph{{IEEE} Trans. Inf. Theory}, vol.~52, no.~4, pp.
  1562--1575, April 2006.

\bibitem{Hay08}
------, ``Second-order asymptotics in fixed-length source coding and intrinsic
  randomness,'' \emph{Information Theory, IEEE Transactions on}, vol.~54,
  no.~10, pp. 4619--4637, Oct 2008.

\bibitem{Hay09}
------, ``Information spectrum approach to second-order coding rate in channel
  coding,'' \emph{{IEEE} Trans. Inf. Theory}, vol.~55, no.~11, pp. 4947--4966,
  Novemeber 2009.

\bibitem{Hay11}
------, ``Exponential decreasing rate of leaked information in universal random
  privacy amplification,'' \emph{{IEEE} Trans. Inf. Theory}, vol.~57, no.~6,
  pp. 3989--4001, June 2011.

\bibitem{Hay13}
------, ``Tight exponential analysis of universally composable privacy
  amplification and its applications,'' \emph{{IEEE} Trans. Inf. Theory},
  vol.~59, no.~11, pp. 7728--7746, Nov 2013.

\bibitem{HayMat10}
M.~Hayashi and R.~Matsumoto, ``Construction of wiretap codes from ordinary
  channel codes,'' \emph{Proc. {IEEE} International Symposium on Information
  Theory}, pp. 2538--2542, 2010.

\bibitem{HayTyaWat14i}
M.~Hayashi, H.~Tyagi, and S.~Watanabe, ``Secret key agreement: General capacity
  and second-order asymptotics,'' \emph{ISIT}, 2014.

\bibitem{HayTyaWat14iii}
------, ``Strong converse for a degraded wiretap channel via active hypothesis
  testing,'' in \emph{Communication, Control, and Computing (Allerton), 2014
  52nd Annual Allerton Conference on}, Sept 2014, pp. 148--151.

\bibitem{HayTyaWat14ii}
------, ``Secret key agreement: General capacity and second-order
  asymptotics,'' \emph{CoRR}, vol. abs/1411.0735v2, 2015.

\bibitem{Hayashi13ii}
M.~Hayashi, ``Quantum wiretap channel with non-uniform random number and its
  exponent and equivocation rate of leaked information,'' \emph{CoRR}, vol.
  abs/1202.0325v3, 2013.

\bibitem{HayM13}
M.~Hayashi and R.~Matsumoto, ``Secure multiplex coding with dependent and
  non-uniform multiple messages,'' \emph{CoRR}, vol. abs/1202.1332v3, 2013.

\bibitem{HayTan15}
M.~Hayashi and V.~Y.~F. Tan, ``Equivocations, exponents and second-order coding
  rates under various {R}\'enyi information measures,'' \emph{CoRR}, vol.
  abs/1504.02536, 2015.

\bibitem{HayT15}
M.~Hayashi and T.~Tsurumaru, ``More efficient privacy amplification with less
  random seeds via dual universal hash function,'' \emph{CoRR}, vol.
  abs/1311.5322v4, 2015.

\bibitem{HoCU11}
S.-W. Ho, T.~Chan, and C.~Uduwerelle, ``Error-free perfect-secrecy systems,''
  in \emph{Information Theory Proceedings (ISIT), 2011 IEEE International
  Symposium on}, July 2011, pp. 1613--1617.

\bibitem{HolensteinR11}
T.~Holenstein and R.~Renner, ``On the randomness of independent experiments,''
  \emph{{IEEE} Trans. Inf. Theory}, vol.~57, no.~4, pp. 1865--1871, April 2011.

\bibitem{ImaMorNasWin06}
H.~Imai, K.~Morozov, A.~C. Nascimento, and A.~Winter, ``Efficient protocols
  achieving the commitment capacity of noisy correlations,'' \emph{Proc. {IEEE}
  International Symposium on Information Theory}, pp. 1432--1436, 2006.

\bibitem{ImpLevLub89}
R.~Impagliazzo, L.~A. Levin, and M.~Luby, ``Pseudo-random generation from
  one-way functions,'' in \emph{Proc. ACM Symposium on Theory of Computing
  (STOC)}, 1989, pp. 12--24.

\bibitem{ImpZuc89}
R.~Impagliazzo and D.~Zuckerman, ``How to recycle random bits,'' in \emph{Proc.
  Annual Symposium on Foundations of Computer Science (FOCS)}, 1989, pp.
  248--253.

\bibitem{JaiRosPan06}
A.~K. Jain, A.~Ross, and S.~Pankanti, ``Biometrics: a tool for information
  security,'' \emph{{IEEE} Transactions on Information Forensics and Security},
  vol.~1, no.~2, pp. 125--143, 2006.

\bibitem{JainTB12}
S.~Jain, T.~Ta, and J.~S. Baras, ``Wormhole detection using channel
  characteristics,'' in \emph{Proceedings of {IEEE} International Conference on
  Communications, {ICC} 2012}, 2012, pp. 6699--6704.

\bibitem{KatLin07}
J.~Katz and Y.~Lindell, \emph{Introduction to Modern Cryptography}.\hskip 1em
  plus 0.5em minus 0.4em\relax Chapman \& Hall/CRC, 2007.

\bibitem{KhinaKK14}
A.~Khina, Y.~Kochman, and A.~Khisti, ``Decomposing the {MIMO} wiretap
  channel,'' \emph{Proc. {IEEE} International Symposium on Information Theory},
  pp. 206--210, 2014.

\bibitem{KoradaU10}
S.~Korada and R.~Urbanke, ``Polar codes for {S}lepian-{W}olf, {W}yner-{Z}iv,
  and {G}elfand-{P}insker,'' in \emph{Information Theory Workshop (ITW), 2010
  IEEE}, Jan 2010, pp. 1--5.

\bibitem{Kra94}
H.~Krawczyk, ``{LFSR}-based hashing and authentication,'' \emph{CRYPTO, LNCS},
  vol. 839, pp. 129--139, 1994.

\bibitem{Che77}
S.~Leung-Yan-Cheong, ``On a special class of wiretap channels,'' \emph{{IEEE}
  Trans. Inf. Theory}, vol.~23, no.~5, pp. 625--627, 1977.

\bibitem{LinB14}
C.~Ling and J.-C. Belfiore, ``Achieving {AWGN} channel capacity with lattice
  {G}aussian coding,'' \emph{CoRR}, vol. abs/1302.5906, 2014.

\bibitem{LinLuzBelSte13}
C.~Ling, L.~Luzzi, J.-C. Belfiore, and D.~Stehl{\'e}, ``Semantically secure
  lattice codes for the {G}aussian wiretap channel,'' \emph{{IEEE} Trans. Inf.
  Theory}, vol.~60, no.~10, pp. 6399--6416, Oct 2014.

\bibitem{LiverisXG02}
A.~Liveris, Z.~Xiong, and C.~Georghiades, ``Compression of binary sources with
  side information at the decoder using {LDPC} codes,'' \emph{Communications
  Letters, IEEE}, vol.~6, no.~10, pp. 440--442, Oct 2002.

\bibitem{MahV11}
H.~Mahdavifar and A.~Vardy, ``Achieving the secrecy capacity of wiretap
  channels using polar codes,'' \emph{{IEEE} Trans. Inf. Theory}, vol.~57,
  no.~10, pp. 6428--6443, Oct 2011.

\bibitem{MaitiGS13}
A.~Maiti, V.~Gunreddy, and P.~Schaumont, ``A systematic method to evaluate and
  compare the performance of physical unclonable functions,'' in \emph{Embedded
  Systems Design with FPGAs}.\hskip 1em plus 0.5em minus 0.4em\relax Springer
  New York, 2013, pp. 245--267.

\bibitem{MansourNT90}
Y.~Mansour, N.~Nisan, and P.~Tiwari, ``The computational complexity of
  universal hashing,'' in \emph{Structure in Complexity Theory Conference,
  1990, Proceedings., Fifth Annual}, July 1990, pp. 90--.

\bibitem{Mas88}
J.~Massey, ``An introduction to contemporary cryptology,'' \emph{Proceedings of
  the IEEE}, vol.~76, no.~5, pp. 533--549, May 1988.

\bibitem{MathurTMY08}
S.~Mathur, W.~Trappe, N.~Mandayam, C.~Ye, and A.~Reznik, ``Radio-telepathy:
  Extracting a secret key from an unauthenticated wireless channel,'' in
  \emph{Proceedings of the 14th ACM International Conference on Mobile
  Computing and Networking}, ser. MobiCom '08, 2008, pp. 128--139.

\bibitem{Mau93}
U.~M. Maurer, ``Secret key agreement by public discussion from common
  information,'' \emph{{IEEE} Trans. Inf. Theory}, vol.~39, no.~3, pp.
  733--742, May 1993.

\bibitem{MauWol00}
U.~M. Maurer and S.~Wolf, ``Information-theoretic key agreement: From weak to
  strong secrecy for free,'' \emph{Proc. EUROCRYPT}, pp. 351--368, 2000.

\bibitem{MukherjeeFHS14}
A.~Mukherjee, S.~Fakoorian, J.~Huang, and A.~Swindlehurst, ``Principles of
  physical layer security in multiuser wireless networks: A survey,''
  \emph{Communications Surveys Tutorials, IEEE}, vol.~16, no.~3, pp.
  1550--1573, Third 2014.

\bibitem{NasWin08}
A.~C.~A. Nascimento and A.~Winter, ``On the oblivious-transfer capacity of
  noisy resources,'' \emph{{IEEE} Trans. Inf. Theory}, vol.~54, no.~6, pp.
  2572--2581, 2008.

\bibitem{OzarowW84}
L.~Ozarow and A.~Wyner, ``Wire-tap channel ii,'' \emph{Proc. EUROCRYPT}, pp.
  33--50, 1984.

\bibitem{Pap01}
R.~S. Pappu, ``Physical one-way functions,'' \emph{Ph.D. Dissertation, {Media
  Arts and Sciences}, Massachussetts Institute of Technology}, 2001.

\bibitem{PolPooVer10}
Y.~Polyanskiy, H.~V. Poor, and S.~Verd{\'u}, ``Channel coding rate in the
  finite blocklength regime,'' \emph{{IEEE} Trans. Inf. Theory}, vol.~56,
  no.~5, pp. 2307--2359, May 2010.

\bibitem{Rab81}
M.~O. Rabin, ``How to exchange secrets with oblivious transfer,'' Cryptology
  ePrint Archive, Report 2005/187, 2005, \url{http://eprint.iacr.org/}.

\bibitem{RanTapWinWul11}
S.~Ranellucci, A.~Tapp, S.~Winkler, and J.~Wullschleger, ``On the efficiency of
  bit commitment reductions,'' in \emph{Proc. ASIACRYPT}, 2011, pp. 520--537.

\bibitem{RaoPra14}
K.~S. Rao and V.~M. Prabhakaran, ``A new upperbound for the oblivious transfer
  capacity of discrete memoryless channels,'' in \emph{Proc. {IEEE} Information
  Theory Workshop}, 2014, pp. 35--39.

\bibitem{RatConBol01}
N.~K. Ratha, J.~H. Connell, and R.~M. Bolle, ``Enhancing security and privacy
  in biometrics-based authentication systems,'' \emph{IBM Syst. J.}, vol.~40,
  no.~3, pp. 614--634, 2001.

\bibitem{RenesRS13}
J.~M. Renes, R.~Renner, and D.~Sutter, ``Efficient one-way secret-key agreement
  and private channel coding via polarization,'' in \emph{Advances in
  Cryptology - ASIACRYPT}, 2013, pp. 194--213.

\bibitem{Ren05}
R.~Renner, ``Security of quantum key distribution,'' \emph{Ph.D. Dissertation,
  {Department of Computer Science}, ETH Zurich}, 2005.

\bibitem{RenWol04}
R.~Renner and S.~Wolf, ``Smooth {R}{\'e}nyi entropy and applications,''
  \emph{Proc. {IEEE} International Symposium on Information Theory}, pp. 233--,
  June 2004.

\bibitem{RenWol05}
------, ``Simple and tight bounds for information reconciliation and privacy
  amplification,'' in \emph{Proc. ASIACRYPT}, 2005, pp. 199--216.

\bibitem{Ren61}
A.~R\'enyi, ``On measures of entropy and information,'' \emph{Proc. Fourth
  Berkeley Symposium on Mathematics Statistics and Probability, Vol. 1 (Univ.
  of Calif. Press)}, pp. 547--561, 1961.

\bibitem{SosogluV13}
E.~Sasoglu and A.~Vardy, ``A new polar coding scheme for strong security on
  wiretap channels,'' in \emph{Information Theory Proceedings (ISIT), 2013 IEEE
  International Symposium on}, July 2013, pp. 1117--1121.

\bibitem{Sha49}
C.~E. Shannon, ``Communication theory of secrecy systems,'' \emph{Bell System
  Technical Journal}, vol.~28, pp. 656--715, 1949.

\bibitem{Silverman99}
J.~Silverman, ``Fast multiplication in finite fields ${GF}(2^{N})$,'' in
  \emph{Cryptographic Hardware and Embedded Systems}, ser. Lecture Notes in
  Computer Science, Ã.~Koç and C.~Paar, Eds.\hskip 1em plus 0.5em minus
  0.4em\relax Springer Berlin Heidelberg, 1999, vol. 1717, pp. 122--134.

\bibitem{SleWol73}
D.~Slepian and J.~Wolf, ``Noiseless coding of correlated information source,''
  \emph{{IEEE} Trans. Inf. Theory}, vol.~19, no.~4, pp. 471--480, July 1973.

\bibitem{Sti02}
D.~R. Stinson, ``{U}niversal hash families and the leftover hash lemma, and
  applications to cryptography and computing,'' \emph{Journal of Combinatorial
  Mathematics and Combinatorial Computing}, vol.~42, pp. 3--31, 2002.

\bibitem{TalVar13}
I.~Tal and A.~Vardy, ``Channel upgrading for semantically-secure encryption on
  wiretap channels,'' \emph{Proc. {IEEE} International Symposium on Information
  Theory}, pp. 1561--1565, 2013.

\bibitem{ThaDC+07}
A.~Thangaraj, S.~Dihidar, A.~Calderbank, S.~McLaughlin, and J.~M. Merolla,
  ``Applications of {LDPC} codes to the wiretap channel,'' \emph{{IEEE} Trans.
  Inf. Theory}, vol.~53, no.~8, pp. 2933--2945, Aug 2007.

\bibitem{Tya13}
H.~Tyagi, ``Common information and secret key capacity,'' \emph{{IEEE} Trans.
  Inf. Theory}, vol.~59, no.~9, pp. 5627--5640, 2013.

\bibitem{TyaWat14}
H.~Tyagi and S.~Watanabe, ``A bound for multiparty secret key agreement and
  implications for a problem of secure computing,'' \emph{Proc. EUROCRYPT},
  2014.

\bibitem{TyagiV14}
H.~Tyagi and A.~Vardy, ``Explicit capacity-achieving coding scheme for the
  {G}aussian wiretap channel,'' \emph{Proc. {IEEE} International Symposium on
  Information Theory}, pp. 956--960, 2014.

\bibitem{TyaWat14ii}
H.~Tyagi and S.~Watanabe, ``Converses for secret key agreement and secure
  computing,'' \emph{CoRR}, vol. abs/1404.5715, 2014.

\bibitem{ValVal13}
P.~Valiant and G.~Valiant, ``Estimating the unseen: Improved estimators for
  entropy and other properties,'' \emph{Proc. NIPS}, pp. 2157--2165, 2013.

\bibitem{WangL13}
W.~Wang and Z.~Lu, ``Survey cyber security in the smart grid: Survey and
  challenges,'' \emph{Comput. Netw.}, vol.~57, no.~5, pp. 1344--1371, Apr.
  2013.

\bibitem{Wang05}
X.~Wang, ``Volumes of generalized unit balls,'' \emph{Mathematics Magazine},
  vol.~78, no.~5, 2005.

\bibitem{WinWul12}
S.~Winkler and J.~Wullschleger, ``On the efficiency of classical and quantum
  secure function evaluation,'' \emph{CoRR}, vol. abs/1205.5136, 2012.

\bibitem{WinNasIma03}
A.~Winter, A.~C.~A. Nascimento, and H.~Imai, ``Commitment capacity of discrete
  memoryless channels,'' in \emph{Proc. Cryptography and Coding}, 2003, pp.
  35--51.

\bibitem{Wyn74}
A.~D. Wyner, ``Recent results in the {S}hannon theory,'' \emph{{IEEE} Trans.
  Inf. Theory}, vol.~20, no.~1, pp. 2--10, January 1974.

\bibitem{Wyn75ii}
------, ``The wiretap channel,'' \emph{Bell System Technical Journal}, vol.~54,
  no.~8, pp. 1355--1367, October 1975.

\bibitem{XiaoGT10}
S.~Xiao, W.~Gong, and D.~Towsley, ``Secure wireless communication with dynamic
  secrets,'' in \emph{INFOCOM, 2010 Proceedings IEEE}, March 2010, pp. 1--9.

\bibitem{YeNar12}
C.~Ye and P.~Narayan, ``Secret key and private key constructions for simple
  multiterminal source models,'' \emph{{IEEE} Trans. Inf. Theory}, vol.~58,
  no.~2, pp. 639--651, February 2012.

\bibitem{YeMRSTM10}
C.~Ye, S.~Mathur, A.~Reznik, Y.~Shah, W.~Trappe, and N.~B. Mandayam,
  ``Information-theoretically secret key generation for fading wireless
  channels,'' \emph{Information Forensics and Security, IEEE Transactions on},
  vol.~5, no.~2, pp. 240--254, June 2010.

\bibitem{YukselKS04}
K.~Yuksel, J.-P. Kaps, and B.~Sunar, ``Universal hash functions for emerging
  ultra-low-power networks,'' \emph{Proceedings of the Communications Networks
  and Distributed Systems Modeling and Simulation Conference}, 2004.

\end{thebibliography}
\bibliographystyle{IEEEtranS}

\end{document}